\newcommand{\blind}{1}
\begin{document}

\def\spacingset#1{\renewcommand{\baselinestretch}%
{#1}\small\normalsize} \spacingset{1}
\def\bs{\boldsymbol}
\def\bl{\textbf}



%

\newtheorem{theorem}{Theorem}[section]
\newenvironment{proof}[1][Proof]{\begin{trivlist}
\item[\hskip \labelsep {\bfseries #1}]}{\end{trivlist}}

\if1\blind
{
  \title{\bf Likelihood Inference for Possibly Non-Stationary Processes via Adaptive Overdifferencing}
  \author{Maryclare Griffin\thanks{
    Financial support is gratefully acknowledged from a Xerox PARC Faculty Research Award, National Science Foundation Awards 1455172, 1934985, 1940124, 2310974, and 1940276, USAID, and Cornell University Atkinson Center for a Sustainable Future. Address for correspondence: Maryclare Griffin, Lederle Graduate Research Tower, North Pleasant Street, Amherst, MA 01003, USA. Email: maryclaregri@umass.edu}\hspace{.2cm}\\
    Department of Mathematics and Statistics \\ University of Massachusetts Amherst, Amherst, MA, USA\\
    Gennady Samorodnitsky\\
    School of Operations Research and Information Engineering \\ Cornell University, Ithaca, NY, USA and \\
   David S. Matteson \\
    Department of Statistics and Data Science \\ Cornell University, Ithaca, NY, USA}
  \maketitle 
} \fi

\if0\blind
{
  \bigskip
  \bigskip
  \bigskip
  \begin{center}
    {\LARGE\bf Likelihood Inference for Possibly Non-Stationary Processes via Adaptive Overdifferencing}
\end{center}
  \medskip
} \fi

\bigskip
\begin{abstract}
We make an observation that facilitates exact likelihood-based inference for the parameters of the popular ARFIMA model without requiring stationarity by allowing the upper bound $\bar{d}$ for the memory parameter $d$ to exceed $0.5$: estimating the parameters of a single non-stationary ARFIMA model is equivalent to estimating the parameters of a sequence of stationary ARFIMA models. This allows for the use of existing methods for evaluating the likelihood for an invertible and stationary ARFIMA model.
This enables improved inference because many standard methods perform poorly when estimates are close to the boundary of the parameter space. It also allows us to leverage the wealth of likelihood approximations that have been introduced for estimating the parameters of a stationary process. 
We explore how estimation of the memory parameter $d$ depends on the upper bound $\bar{d}$ and introduce adaptive procedures for choosing $\bar{d}$. 
We show via simulation how our adaptive procedures estimate the memory parameter well, relative to existing alternatives, when the true value is as large as 2.5.
\end{abstract}

\noindent%
{\it Keywords:}  long memory; ARFIMA; FARIMA \\
{\it MOS subject classification:} 62M10
\vfill
\newpage
\spacingset{1.5} 
\section{Introduction}

Many methods for analyzing  an equally spaced time series $\bs y = \left(y_1, \dots, y_n\right)$  have been developed. Stationary autoregressive moving average (ARMA) models and their non-stationary generalizations predominate. An  ARMA$\left(p, q\right)$ model assumes that deviations  of  observations  $y_t$  from  their  means  $\mu_t$  are  a function of past  deviations  and stochastic errors,
\begin{align}\label{eq:arma}
\phi\left(B\right) \left(y_t - \mu_{t}\right) = \theta\left(B\right) z_t\text{, \quad} z_t \stackrel{i.i.d.}{\sim}\mathcal{N}\left(0, \sigma^2\right),
\end{align}
where $\phi\left(B\right) = 1 - \sum_{\ell = 1}^p \phi_{\ell} B^{\ell}$, $\theta\left(B\right) = 1 + \sum_{\ell = 1}^q \theta_{\ell} B^{\ell}$, and $B$ is the shift $B^{\ell} y_t = y_{t-\ell}$.  The mean $\mu_t$ is a specified function of a small number of  unknown  parameters, and may  be  assumed to be  an unknown  constant, a low degree polynomial function in time $t$  with unknown coefficients, or a linear function of a small number of predictors  with unknown coefficients.  Equation~\eqref{eq:arma}  describes a stationary, causal, and invertible model   when all roots of the autoregressive polynomial $\phi\left(z\right)$ and all roots of the moving average polynomial $\theta\left(z\right)$ lie outside of the unit circle. Stationarity ensures that the mean and variance of the  deviations $y_t - \mu_t$  are  constant over time and  that  correlations between deviations  depend only on how far apart they are in time. \color{black}T\color{black}he autocorrelation function uniquely determines the ARMA$\left(p, q\right)$ parameters. 
An autoregressive integrated moving average (ARIMA) model generalizes the ARMA model  to allow for certain types of non-stationarity, specifically the  presence of certain deterministic time trends \citep{Box1970}.  An ARIMA$\left(p, d, q\right)$ model assumes that there is a nonnegative integer $d$ such that the $d$-th differences $\left(1 - B\right)^d \left(y_t - \mu_{t}\right)$ satisfy the ARMA$\left(p, q\right)$ equation.  Thus, the ARIMA$\left(p, d, q\right)$ model is equivalent to assuming  an ARMA$\left(p, q\right)$ model  for deviations of a simple function of observations from their means $\left(1 - B\right)^d\mu_t$, which can allow for the presence of certain deterministic trends without estimating them because $\left(1 - B\right)^d\mu_t = 0$ when $\mu_t = \sum_{j = 0}^{d-1} t^j \lambda_j$. 

Stationary ARMA$\left(p, q\right)$ models are not well suited  for modeling  correlations that decay very slowly over time  because slowly decaying correlations require ARMA$\left(p, q\right)$ models with an increasing number of parameters as the length of the time series grows \citep{Granger1980a}. 
For this purpose, long memory or autoregressive fractionally differenced moving average (ARFIMA or FARIMA) models have been developed \citep{Hosking1981, Granger1980a}. 
 An ARFIMA$\left(p, d, q\right)$  model assumes that a fractional difference of the deviations  $\left(1 - B\right)^d \left(y_t - \mu_{t}\right)$ is distributed according to a stationary ARMA$\left(p, q\right)$ model where
\begin{align}\label{eq:expandfd}
\left(1 - B\right)^d =\sum_{\ell = 0}^\infty {d \choose \ell}\left(-1\right)^{\ell} B^{\ell}.
\end{align}
This describes a stationary and invertible ARFIMA$\left(p, d, q\right)$ model when $-1 < d < 0.5$ \citep{Odaki1993}. 
An  ARFIMA$\left(p, d, q\right)$ process has slowly decaying correlations  when $d > 0$,  because  each  deviation  $y_t - \mu_t$  depends on infinitely many past  deviations and the corresponding weights decay slowly. The ARIMA$\left(p, d, q\right)$ model is obtained when $d$ is an integer.
Such models are fit to the data by first differencing the data an appropriate number of times, and fitting a stationary model to the differenced  deviations.

One common approach for maximum likelihood estimation of possibly non-stationary  ARFIMA$\left(p, d, q\right)$ models is to find the smallest integer $k$ for which $\left(1 - B\right)^k \left(y_t - \mu_{t}\right)$ appears to be stationary and  assume a stationary ARFIMA$\left(p, d, q\right)$ model  for  the differenced  deviations  $\left(1 - B\right)^k \left(y_t - \mu_{t}\right)$ for $-0.5 < d  < 0.5$.  This approach is described in \cite{Hualde2011} and intuitively called the ``difference-and-add-back'' approach by \cite{Johansen2016}; it has been recommended by \cite{Box-Steffensmeier1998} and used in practice \citep{Byers1997,Byers2000,Dolado2003}.  
Although useful, this procedure can lead to practical challenges in the presence of nearly non-stationary differenced ARFIMA$\left(p, d, q\right)$ processes, which are processes that are well represented by values of $d$ close to the boundary of stationarity, e.g.\ $d \approx 0.5$, $d \approx 1.5$, or $d \approx 2.5$. 
\color{black}This is especially apparent when allowing  $d$ to vary across subsets of a single time series, e.g. as described  in \cite{Graves2015},  or when pooling information across multiple replicate time series to estimate a common $d$. The ``difference-and-add-back'' approach may lead to different amounts of differencing for different subsets of the same time series or for different time series. 
Furthermore, it can be difficult to decide whether or not a stationary model is reasonable for an observed time series with slowly decaying empirical correlations.

Figure~\ref{fig:toyex} illustrates this with two time series of length $n = 500$. Both time series are simulated according to an ARFIMA$\left(0, d, 0\right)$ model $\left(1 - B\right)^d y_t = z_t$ using the same stochastic errors $z_t$ simulated from a standard normal distribution. 
The first is simulated according to a stationary process with $d = 0.45$ and the second is simulated according to a non-stationary process with $d = 0.55$.
 The latter time series is simulated by taking cumulative sums of time series simulated according to a stationary process. 
Although the first time series is simulated from a stationary model and the second is not, both observed time series and their corresponding sample autocorrelation functions look similar. 

\begin{figure}[ht!]
\centering
\includegraphics{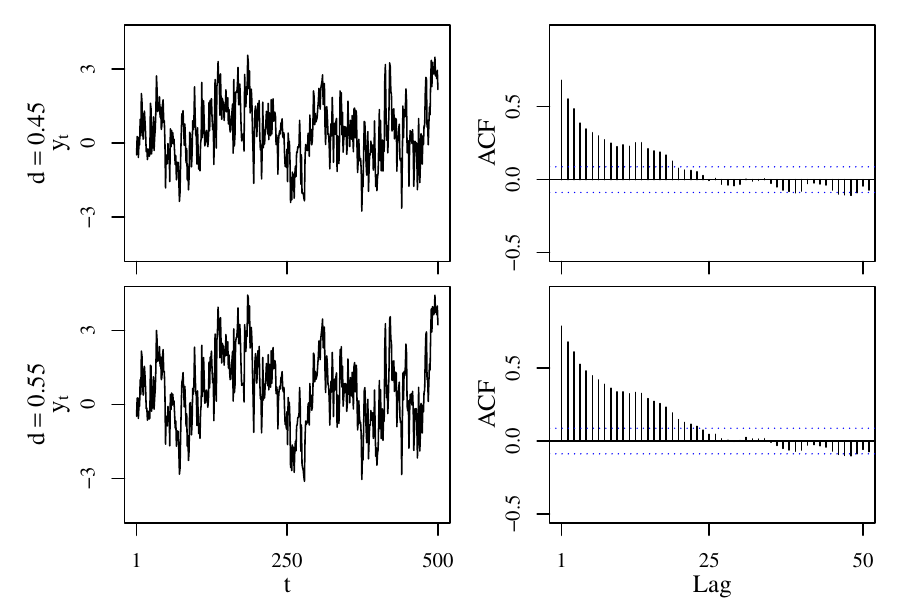}
\caption{Simulated length $n = 500$ time series and their sample autocorrelation functions (ACFs). Both time series satisfy the ARFIMA$\left(0, d, 0\right)$ model with $\mu = 0$ and the same stochastic errors with memory parameter $d = 0.45$ or $d = 0.55$. For reference, approximate 95\% intervals for sample autocorrelations of a white noise process are provided with ACFs.}
\label{fig:toyex}
\end{figure}

\begin{figure}[ht!]
\centering
\includegraphics{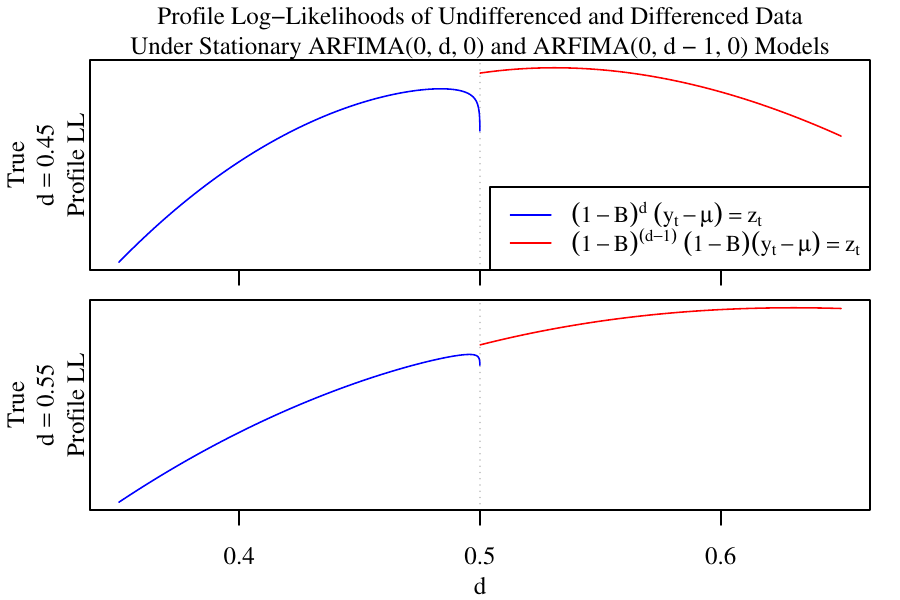}
\caption{Profile log-likelihoods of undifferenced and first-differenced data under an ARFIMA$\left(0, d, 0\right)$ with $\mu_t = \mu$  and  $\sigma^2$ profiled for the time series shown in Figure~\ref{fig:toyex}. 
}
\label{fig:lls}
\end{figure}

Despite the fact that the stationary and non-stationary time series shown in Figure~\ref{fig:toyex} look similar, the likelihoods of the two time series under an ARIMA$\left(0, d - k, 0\right)$ model  for the deviations $y_t - \mu$  obtained using  the  ``difference-and-add-back'' approach  of fitting a stationary ARIMA$\left(0, d - k, 0\right)$ model to the $k$-th differenced data $\left(1 - B\right)^k \left(y_t - \mu\right)$ for $-0.5 < d - k  < 0.5$ are not continuous at  $d=k + 0.5$;  see Figure~\ref{fig:lls}.
This is because the data changes from the $n$ observed time series values to the $n - 1$ observed differences when we evaluate the likelihood for $d > 0.5$.
For this reason, likelihood values obtained in this way are of limited utility. They are only comparable across subsets of $d$ values that correspond to the same amount of differencing. 
Furthermore, the presence of boundaries and discontinuities can produce misleading standard errors and confidence intervals for estimates of the memory parameter $d$, whether they are based on asymptotic or bootstrap methods. A possible solution is suggested by recognizing that $\left(1 - B\right)^d\left(y_t - \mu\right) = \left(1 - B\right)^{d-k} \left(1 - B\right)^k \left(y_t - \mu\right)$. Thus, if $y_t - \mu$ is an ARFIMA$\left(0, d, 0\right)$ process, then the $k$-th difference $\left(1 - B\right)^k \left(y_t - \mu\right)$ is an ARFIMA$\left(0, d - k, 0\right)$ process. However, an ARFIMA$\left(0, d - k, 0\right)$ process is only invertible  for $d > k - 1$. Although consistency and asymptotic normality of the maximum likelihood estimator of the memory parameter $d$ has been shown for all $d < k + 0.5$ \citep{Lieberman2012}, careful examination of the references in and of \cite{Lieberman2012} yields no examples of maximum likelihood estimators that allow for $d - k$ outside of the invertible range, $d \leq k  -1$. 

Many existing alternative solutions approximate the likelihood. 
The conditional sum-of-squares (CSS) approximate likelihood, as described in \cite{Beran1995}, \cite{Hualde2011}, and  \cite{Hualde2020}, uses the truncated difference $\left(1 - B\right)^d_+ \left(y_t - \mu\right) =\sum_{\ell = 0}^{t-1} {d \choose \ell}\left(-1\right)^{\ell-1} B^{\ell}\left(y_t - \mu\right)$ to approximate $\left(1 - B\right)^d \left(y_t - \mu\right)$ and obtain an approximate likelihood that is continuous in $d$.
\cite{Hualde2011} and \cite{Hualde2020} proved that the CSS approximate likelihood provides consistent, asymptotically normal parameter estimates under a model where the truncated fractional difference of the data $\left(1 - B\right)^d_+ \left(y_t - \mu_t\right) =\sum_{\ell = 0}^{t-1} {d \choose \ell}\left(-1\right)^{\ell-1} B^{\ell}\left(y_t - \mu_t\right)$ is distributed according to a stationary ARMA$\left(p, q\right)$ model with $\mu_t = \mu_0 \left(t\mathbbm{1}_{\left\{t \geq 1 \right\}}\right)^{\gamma_0}$ for unknown $\mu_0$ and $\gamma_0$.
However, CSS approximate likelihood based estimates can be more biased than exact likelihood based estimates in finite samples if the data is generated according to an ARFIMA$\left(p, d, q\right)$ process \citep{Johansen2016}.
Other  alternatives involve the specification of additional tuning parameters. 
 \cite{Velasco2000} introduced spectral methods for $-0.5 < d$ and \cite{Hurvich1998} introduced spectral methods that have the added benefit of being invariant to the presence of linear trends for $-0.5 < d < 1.5$. The estimators introduced in \cite{Velasco2000} depend on the tapering applied to the sample periodogram and both the estimators introduced in both \cite{Velasco2000} and \cite{Hurvich1998} require specification of the number of periodogram ordinates used for estimation. 
\cite{Mayoral2007} introduced a moment-based method for $d > -0.75$ based on the first $k$ sample autocorrelations, however it  requires specification of $k$.
 None of these references include comparisons to exact likelihood-based estimators that allow for estimates of the memory parameter outside of the stationary and invertible range.

This paper shows  that given an upper bound $\bar{d}$ for the memory parameter $d$, it is possible to implement   exact  likelihood  estimation  for differenced data for \emph{all} $d < \bar{d}$. 
 Our approach is motivated by the earlier observation that the $k$-th difference of an ARFIMA$\left(0, d, 0\right)$ process is a ARFIMA$\left(0, d - k, 0\right)$ process  and the literature showing consistency and  asymptotic normality of exact likelihood estimators of  ARFIMA$\left(0, d-k, 0\right)$ models for all $d -k< 0.5$ \citep{Lieberman2012}. 
Given an upper bound $\bar{d}$, we can difference the data before estimation and reduce the problem of estimating the parameters of a single non-stationary ARFIMA$\left(p, d, q\right)$ model to the problem of estimating the parameters of a sequence of stationary ARFIMA$\left(p, d, q\right)$ models with constrained moving average parameters. 
The upper bound $\bar{d}$  can be selected adaptively without a priori knowledge of the process' stationarity.

This paper proceeds as follows. First, we explain how the problem of estimating the parameters of a possibly non-stationary ARFIMA$\left(p, d, q\right)$ model with memory parameter $d$ bounded above by a fixed value $\bar{d}$ can be transformed to a simpler problem of estimating the parameters of a sequence of stationary ARFIMA$\left(p, d, q\right)$ models with constrained moving average parameters that correspond to  a  non-invertible moving average process, with likelihoods that can then be related to the likelihoods of invertible and stationary ARFIMA$\left(p, d, q\right)$ models.
We then introduce adaptive procedures for choosing $\bar{d}$.
We demonstrate the need for and performance of the adaptive procedures based on the exact likelihood and the approximate likelihoods  for ARFIMA$\left(0, d, 0\right)$ processes  in simulations. We show that the adaptive procedures can produce estimates of $d$ with low bias and, when based on the exact likelihood, confidence intervals with nominal coverage. We also compare the bias of adaptive exact likelihood estimators to the two alternatives  described in \cite{Beran1995} and introduced in \cite{Mayoral2007}. We observe comparable performance to alternatives  when $n$ is relatively small and better performance than the alternatives  as $n$ increases when long memory is present.
We  use the proposed methods to fit possibly non-stationary ARFIMA$\left(1, d, 0\right)$ and ARFIMA$\left(0, d, 1\right)$ models to time series featured in the literature, and discuss the challenges of estimating the parameters of possibly non-stationary ARFIMA$\left(p, d, q\right)$ models with $p > 0$ or $q > 0$ in practice. We apply the proposed methods to an existing problem which uses tests of non-stationarity of ARFIMA$\left(0, d, 0\right)$ models for deviations from a linear trend to assess mean reversion of OECD countries' per capita CO$_2$ emissions. Last, we apply the proposed methods to pooled estimation of ARFIMA$\left(0, d, 0\right)$ models from electric cell-substrate impedance sensing (ECIS) measurements.

\section{Methodology}

\subsection{Relating Non-Stationary to Stationary Problems Given $\bar{d}$}

Let $y_t-\mu_t$ be a possibly non-stationary ARFIMA$\left(p, d, q\right)$ process with autoregressive parameters $\bs \phi = \left(\phi_1, \dots, \phi_p\right)$ and moving average parameters $\bs \theta = \left(\theta_1, \dots, \theta_q\right)$ satisfying
\begin{align}\label{eq:arfima}
\phi\left(B\right) \left(1 - B\right)^d \left(y_t - \mu_{t} \right)= \theta\left(B \right) z_t\text{, \quad} z_t \stackrel{i.i.d.}{\sim}\mathcal{N}\left(0, \sigma^2\right),
\end{align} 
where all roots of $\phi\left(z\right)$ and $\theta\left(z\right)$ lie outside of the unit circle and $d < \bar{d}$ for some $\bar{d} \geq 0.5$.  
Deviations of the differenced process $x^{\left(m\right)}_t = \left(1 - B\right)^{m} y_t$  from their means $\mu^{\left(m\right)}_t = \left(1 - B\right)^{m} \mu_{t}$ are stationary  for any $m$ satisfying $\bar{d} - m < 0.5$ and can be computed exactly for $t > m$ if $m$ is an integer. Let $m_{\bar{d}}$ be the smallest integer that satisfies $\bar{d} - m_{\bar{d}} \leq 0.5$. 

Given an integer $m_{\bar{d}}$ satisfying  $\bar{d} - m_{\bar{d}}  \leq 0.5$,  we  can evaluate  the likelihood of the differenced deviations  $l_{\bar{d}}\left(\bs x^{\left(m_{\bar{d}}\right)}| d, \mu_t, \sigma, \bs \theta, \bs \phi\right)$ %
  for $d < \bar{d}$ \citep{Hosking1981}. When the differenced mean is constant, $\mu^{\left(m\right)}_t  = \mu$, the maximum likelihood estimator of $d$ based on the differenced deviations will be consistent and asymptotically normal for $d < \bar{d}$ \citep{Lieberman2012}.
This can produce  estimates of the  memory  parameter over the interval $d <  \bar{d}$ that are invariant to polynomial trends of degree $m$ or lower, depending on the assumed mean $\mu_t$. Setting $\bar{d} = 1.5$ and $\mu_t = \lambda_0 + \lambda_1 t$  yields  invariance to linear trends, and setting $\bar{d} = 2.5$ and $\mu_t =  \sum_{j = 0}^2 t^j \lambda_j$  yields  invariance to quadratic trends. This is beneficial when polynomial trends are a nuisance and limiting \color{black} when \color{black} polynomial trends are of interest. 

Despite theoretical justifications of maximum likelihood estimation for $d < \bar{d}$ \citep{Lieberman2012}, maximum likelihood estimation of a stationary ARFIMA$\left(p, d, q\right)$ process tends to require $\bar{d} - 1 \leq d < \bar{d}$, see e.g. \citep[pages 539-542]{Pipiras2017} and \cite{Durham2019}. Then the likelihood is only evaluated for parameters corresponding to an invertible process and efficient methods for evaluating the autocovariances which require $\bar{d} - 1 \leq d < \bar{d}$, e.g.  the methods of \cite{Sowell1992a}, can be used. This alleviates the computational burdens of evaluating autocovariances of a stationary ARFIMA$\left(p, d, q\right)$ process \citep{Doornik2003}.  
This motivates rewriting the model as 
\begin{align}\label{eq:arfimacon}
\phi\left(B\right) \left(1 - B\right)^{d - m_{\bar{d}} + j}\left(x^{\left(m_{\bar{d}}\right)}_t - \mu^{\left(m_{\bar{d}}\right)}_t\right)=
 \left(1 - B\right)^j \theta\left(B \right) z_t\text{, \quad} z_t \stackrel{i.i.d.}{\sim}\mathcal{N}\left(0, \sigma^2\right),
\end{align}
where $j$ is a nonnegative integer satisfying $-0.5 \leq d - m_{\bar{d}} + j < 0.5$. 
This is a stationary ARFIMA$\left(p, d - m_{\bar{d}} + j, j + q\right)$ model  for the differenced  deviations  with $j + q$ constrained moving average parameters $\tilde{\theta}_1^{\left(j\right)},\dots, \tilde{\theta}^{\left(j\right)}_{j + q}$  obtained by expanding out $\left(1 - B\right)^j\theta\left(B\right)$. It is non-invertible for $j > 0$; the moving average polynomial $\tilde{\theta}\left(B\right)$ has roots on the unit circle.

The advantage of rewriting the model is that the covariance $\boldsymbol \Gamma_{n - m_{\bar{d}}}\left(d - m_{\bar{d}} + j, \tilde{\bs \theta},\bs \phi, \sigma\right)$ of $n - m_{\bar{d}}$ differenced deviations under the ARFIMA$\left(p, d - m_{\bar{d}} + j, j + q\right)$ model \eqref{eq:arfimacon} is
\begin{align}\label{eq:arfimaconcov}
\boldsymbol \Gamma_{n - m_{\bar{d}}}\left(d - m_{\bar{d}} + j, \tilde{\bs \theta},\bs \phi, \sigma\right) = \bs A_{\bar{d}\left(j\right)} \boldsymbol \Omega_{n - m_{\bar{d}} + j}\left(d - m_{\bar{d}} + j, \bs \theta,\bs \phi, \sigma\right) \bs A_{\bar{d}\left(j\right)}',
\end{align}
where $\bs A_{\bar{d}\left(j\right)}$ is the $\left(n -   m_{\bar{d}}\right) \times \left(n - m_{\bar{d}} + j\right)$ $j$-th differencing matrix that returns $n - m_{\bar{d}} $ $j$-th differences and $ \boldsymbol \Omega_{n - m_{\bar{d}} + j}\left(d - m_{\bar{d}} + j, \bs \theta,\bs \phi, \sigma\right)$ is the $\left(n  - m_{\bar{d}} + j\right)\times \left(n  - m_{\bar{d}} + j\right)$ covariance matrix for a stationary and invertible ARFIMA$\left(p, d - m_{\bar{d}} + j, q\right)$ process with moving average and autoregressive parameters $\boldsymbol \theta$ and $\boldsymbol \phi$ and standard deviation $\sigma$. As a result, any method for obtaining autocovariances of a stationary and invertible ARFIMA$\left(p, d, q\right)$ process can be used to obtain the autocovariances needed to compute the likelihood of $\boldsymbol x^{\left(m_{\bar{d}}\right)} - \boldsymbol \mu^{\left(m_{\bar{d}}\right)}$. Furthermore, autocovariances can be reused across multiple values of $j$.

Given an upper bound $\bar{d}$  the ARFIMA$\left(p, d, q\right)$ likelihood can be obtained as a function of the mean $\mu_t$,  memory  parameter $d$, moving average and autoregressive autoregressive parameters $\boldsymbol \theta$ and $\boldsymbol \phi$, and standard deviation $\sigma$ by finding the integer $m_{\bar{d}}$ which satisfies $-0.5 < m_{\bar{d}} - \bar{d} \leq 0.5$, computing the differences $x^{\left(m_{\bar{d}}\right)}_t = \left(1 - B\right)^{m_{\bar{d}}} y_t$  and $\mu^{\left(m_{\bar{d}}\right)}_t = \left(1 - B\right)^{m_{\bar{d}}}\mu_t$, finding the integer $j > 0$ satisfying $-0.5 \leq d + m_{\bar{d}} + j  \leq 0.5$, and evaluating the likelihood of the deviations $x^{\left(m_{\bar{d}}\right)}_t - \mu^{\left(m_{\bar{d}}\right)}_t$ under the stationary constrained ARFIMA$\left(p, d - m_{\bar{d}} + j, j + q\right)$ model given by \eqref{eq:arfimacon}.	
This  yields a likelihood that is continuous for $d < \bar{d}$.  This is shown in Section~\ref{appsec:sec1} of the Appendix.

This procedure is amenable to the use of arbitrary approximations for ARFIMA$\left(p, d, q\right)$ likelihoods, specifically approximations which require stationarity. However, we caution that the resulting approximate likelihood for $d < \bar{d}$ may not be continuous.
\cite{Pipiras2017} provide a review of likelihood approximations for ARFIMA$\left(p, d, q\right)$ processes. 
We consider the Whittle approximation, which is obtained by substituting the Whittle log-likelihood, as described in \cite{Beran1995}, for the exact log-likelihood of an ARFIMA$\left(p, d - m_{\bar{d}} + j, j + q\right)$ process. This is equivalent to substituting the Whittle log-
likelihood for the exact log-likelihood of the ARFIMA$\left(p, d - m_{\bar{d}}, q\right)$ process, and produces 
a continuous approximate likelihood in $d < \bar{d}$.
We consider an additional conditional sum-of-squares approximation in Section~\ref{appsec:sec2} of the Appendix. 

\subsection{Data-Adaptive Choice of Upper Bound $\bar{d}$}\label{subsec:adapt}

It is desirable to set $\bar{d}$ as small as possible  because taking the $m_{\bar{d}}$-th difference $x^{\left(m_{\bar{d}}\right)}_t$ reduces the amount of available data from $n$ observations to $n - m_{\bar{d}}$ observations , prevents estimation of polynomial trends of degree $m$ or lower, and can lead to larger standard errors and wider confidence intervals.   For instance, if we choose the value $\bar{d} = 2.5$, then it is necessary to use $m_{\bar{d}} = 2$. This effectively yields $n - 2$ observations. At the same time, it is desirable to set $\bar{d}$ large enough that the estimator $\hat{d}_{\bar{d}}$ is not too close to the boundary $\bar{d}$, because many likelihood- and bootstrap-based methods for estimating standard errors for $\hat{d}_{\bar{d}}$, estimating confidence intervals for $\hat{d}_{\bar{d}}$, and performing tests using $\hat{d}_{\bar{d}}$ are likely to perform poorly when the estimator $\hat{d}_{\bar{d}}$ is close to the boundary $\bar{d}$.

Motivated by the asymptotic normality of the maximum likelihood estimate of the memory parameter under stationary ARFIMA$\left(p, d, q\right)$ models \citep{Lieberman2012},  we suggest the following adaptive procedure for selecting $\bar{d}$. 
We adaptively find the smallest $\bar{d}$ such that both the maximizing value $\hat{d}_{\bar{d}}$ and the approximate $\left(1 - \epsilon\right)\times 100$-percentile of the  asymptotic  distribution of $\hat{d}_{\bar{d}}$ is less than $\bar{d}$ for some small $\epsilon > 0$. We achieve the former by checking whether the profile likelihood of the differenced data $l_{\bar{d}}\left(\bs x^{\left(m_{\bar{d}}\right)} | d \right)$ is decreasing as $d$ approaches the upper boundary $\bar{d}$.
Crucially, this procedure avoids maximization of the log-likelihood for values of $\bar{d}$ that have a local maximum at the boundary, and is a stepwise procedure that stops once a suitable value $\bar{d}$ is reached.

Starting with $\bar{d} = 0.5$ and given $\delta > 0$ and $\epsilon > 0$, our procedure proceeds as follows:
\begin{enumerate}
	\item Approximate the derivative of the profile log-likelihood $l_{\bar{d}}\left(\bs x^{\left(m_{\bar{d}}\right)}|d \right)$ 
 at $d = \bar{d} - 2\delta$.
	\begin{align*}
	l_{\bar{d}}'\left(\bs x^{\left(m_{\bar{d}}\right)}|\bar{d} - 2\delta \right) \approx \frac{l_{\bar{d}}\left(\bs x^{\left(m_{\bar{d}}\right)}|\bar{d} - \delta \right) - l_{\bar{d}}\left(\bs x^{\left(m_{\bar{d}}\right)}|\bar{d} - 2\delta  \right)}{\delta}.
	\end{align*}
	\begin{itemize}
		\item If $l_{\bar{d}}'\left(\bs x^{\left(m_{\bar{d}}\right)}|\bar{d} - 2\delta\right) > 0$, set $\bar{d} = \bar{d} + 1$ and return to 1. Otherwise,  proceed to 2. 
		\end{itemize}
	\item Compute $\hat{d}_{\bar{d}}$, the value of the memory parameter $d$ that maximizes the log-likelihood $l_{\bar{d}}\left(\bs x^{\left(m_{\bar{d}}\right)}|d \right)$, and compute an approximate $\left(1-\epsilon\right)\times100 \%$ percentile according to $\hat{d}_{\bar{d}} + z_{1 - \epsilon} /\sqrt{l_{\bar{d}}''\left(\bs x^{\left(m_{\bar{d}}\right)}|\hat{d}_{\bar{d}} \right)\left(n - m_{\bar{d}} - p - q\right)}$, where $z_{1 - \epsilon}$ is the $\left(1 - \epsilon\right)\times 100 \%$ percentile of a standard normal distribution.
	\begin{itemize}
		\item If the $\left(1 - \epsilon\right)\times100 \%$ percentile exceeds $\bar{d}$,  set $\bar{d} = \bar{d} + 1$ and return to 1.
		\item If the $\left(1 - \epsilon\right)\times100 \%$ percentile is less than $\bar{d}$,  stop.
	\end{itemize}
\end{enumerate}

This creates a buffer of at least $z_{1 - \epsilon}$ standard errors $1/\sqrt{l_{\bar{d}}''(\bs x^{(m_{\bar{d}})}|\hat{d}_{\bar{d}} )(n - m_{\bar{d}} - p - q)}$ between the estimate $\hat{d}_{\bar{d}}$ and the upper bound $\bar{d}$. When $\epsilon = 0.5$, this  ensures that $\hat{d}_{\bar{d}}< \bar{d}$, which is appropriate when approximating the sampling distribution of $\hat{d}_{\bar{d}}$ well is infeasible. We find that $\delta = 0.01$ performs well in practice; it is large enough to avoid numerical instability near the boundary and small enough to characterize the behavior of the log-likelihood at the boundary. We explore the choice of $\epsilon$ in simulations. 

\section{Simulations}

We use simulations to explore the performance of exact likelihood  and Whittle  likelihood estimators for fixed upper bounds $\bar{d} \in \left\{0.5, 1.5, 2.5, 3.5\right\}$ for $\mu_t = \mu$.  We focus on ARFIMA$\left(0, d, 0\right)$ models in simulations because computation for fitting ARFIMA$\left(p, d, q\right)$ models with $p > 0$ or $q > 0$ is much slower than computation for fitting ARFIMA$\left(0, d, 0\right)$ models. ARFIMA$\left(p, d, q\right)$ models are fit in two applications in Section~\ref{sec:app}.
We consider interval estimation and sampling distribution based adaptive estimators for exact likelihood based estimators only.  
We consider the same set of simulations performed in \cite{Beran1995} and \cite{Mayoral2007}. We examine the performance of alternative estimators of $d$ for true memory parameter values $d \in \left\{-0.7, -0.3, -0.2,  0.0, 0.2, 0.4, 0.7, 0.8, 1.0, 1.2, 1.4, 2.0, 2.2\right\}$ and sample sizes $n \in \left\{100, 200, 400, 500\right\}$. For each pair of $d$ and $n$ values, we simulate $100$ ARFIMA$\left(0, d, 0\right)$ time series, fixing $\mu = 0$ and $\sigma^2 = 1$. 
 For \color{black}$0.5 \leq d < 1.5$ and $1.5 \leq d < 2.5$\color{black}, simulated time series are obtained by taking \color{black} one \color{black} or two cumulative sums of simulated stationary ARFIMA time series, respectively. \color{black}

\subsection{Point Estimation of the Memory Parameter for Fixed $\bar{d}$}

\begin{figure}[ht!]
\centering
\includegraphics[scale = 1]{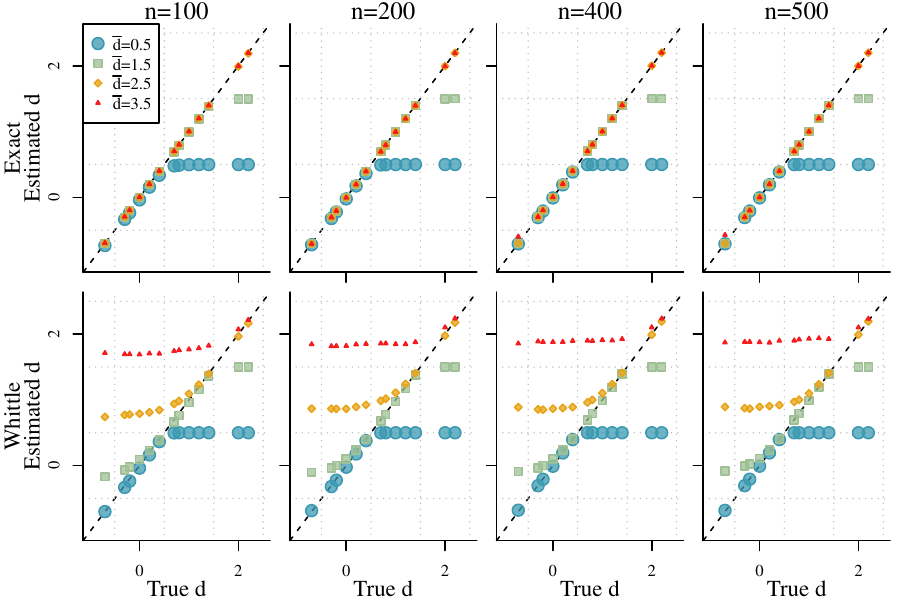}
\caption{Average estimates of $d$ across 100 ARFIMA$\left(0, d, 0\right)$ time series with $\mu = 0$ and $\sigma^2 = 1$ from  maximizing exact  or  Whittle  likelihoods  with respect to $d$, $\mu$, and $\sigma^2$.
}
\label{fig:mean_ests}
\end{figure}

 The first row of  Figure~\ref{fig:mean_ests} shows the average estimates of $d$ obtained by maximizing the exact likelihood  with respect to $d$, $\mu$, and $\sigma^2$. We consider upper bounds $\bar{d}$ that exceed the true data generating value of $d$.
We estimate $d$ well at all sample sizes as long as $\bar{d}$ exceeds the true value of $d$. Surprisingly, we estimate $d$ well even when the true value of $d$ is much smaller than $\bar{d}$, e.g., when the true value of $d$ is negative and $\bar{d} > 0.5$. This suggests that we do not pay a high price for overdifferencing the data when we are interested in obtaining a point estimate of the memory parameter $d$ by maximizing the exact likelihood.

The second row of Figure~\ref{fig:mean_ests} shows average estimates of $d$ obtained by maximizing the Whittle approximate likelihood with respect to $d$, $\mu$, and $\sigma^2$, respectively.   For the Whittle approximate likelihood estimator, we pay a price for overdifferencing. Whittle  estimates of $d$ are much more sensitive to the choice of $\bar{d}$; they  are biased if $\bar{d}$ is too large or too small. Not only do they underestimate the memory parameter $d$  when the true value of the memory parameter exceeds $\bar{d}$, they also systematically overestimate $d$ when $\bar{d}$ exceeds the true value of $d$, especially when $\bar{d}$ is much larger than the true value of $d$. 
They perform relatively well only when the true value $d$ satisfies $\bar{d} - 1.5 \leq d \leq \bar{d}$.  The Whittle estimator's  systematic overestimation of $d$ when the true value $d \leq -0.5$  is consistent with the literature. \cite{Hurvich1995} demonstrated that the log periodogram is biased and overestimates the memory parameter when the true value  $d \leq-0.5$, and \cite{Hurvich1998} observed this in simulations.

\subsection{Interval Estimation of the Memory Parameter for Fixed $\bar{d}$}

\begin{figure}[ht!]
\centering
\includegraphics[scale = 1]{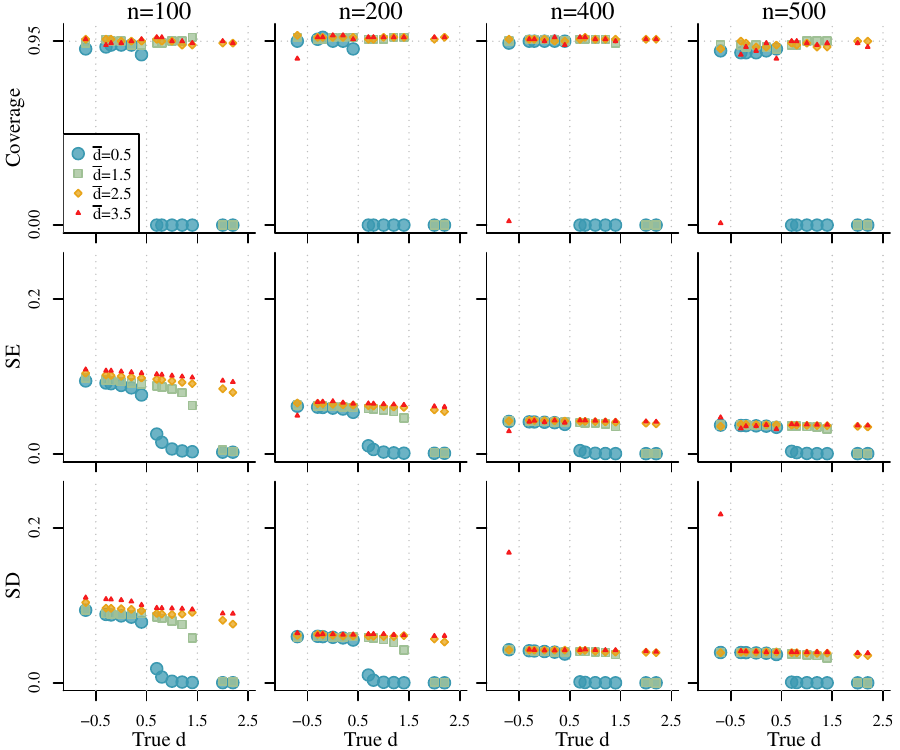}
\caption{Coverage of 95\% confidence intervals for $d$, average standard errors of $\hat{d}$, and standard deviations of $\hat{d}$  across 100 ARFIMA$\left(0, d, 0\right)$ time series with $\mu = 0$ and $\sigma^2 = 1$  from  maximizing the exact likelihood  with respect to $d$, $\mu$, and $\sigma^2$.} 
\label{fig:coverage}
\end{figure}

In practice, we may also be interested in the uncertainty of our estimate of the memory parameter $d$. For exact likelihood based estimators,  Figure~\ref{fig:coverage} shows the coverage of 95\% confidence intervals for the memory parameter $d$, average standard errors of $\hat{d}$, and approximate standard deviations of $\hat{d}$.   
Unsurprisingly, coverage is poor when the true value of the memory parameter $d$ exceeds the maximum value $\bar{d}$. Standard errors are larger and 95\% confidence intervals are wider  for larger values of $\bar{d}$, especially when $n$ is small.
Thus, we pay a small price for overdifferencing when we are interested in the uncertainty of our estimate of the memory parameter $d$ when $\bar{d}$, as long as $\bar{d}$ is not too large.
We pay a high price for overdifferencing if $\bar{d}$ is especially large relative to the true value of the long memory parameter, specifically if the difference between $\bar{d}$ exceeds the true value of the long memory parameter by three or more, even when $n$ is large.
Specifically, we obtain 95\% intervals with near zero coverage and standard errors that vastly underestimate the variability of $\hat{d}$. This is caused by numerical instability of the likelihood  
when $d \in \left(-0.5 + m_{\bar{d}} - j, 0.5 + m_{\bar{d}} - j\right)$ and $j \geq 4$. We evaluate the likelihood in this range by computing the likelihood of a stationary ARFIMA$\left(p, d - m_{\bar{d}} + j, j + q\right)$ model with $j + q$ constrained moving average parameters $\tilde{\theta}_1^{\left(j\right)},\dots, \tilde{\theta}^{\left(j\right)}_{j + q}$ obtained by expanding out $\left(1 - B\right)^j\theta\left(B\right)$.  This corresponds to the likelihood of a stationary process and thus corresponds to a positive definite variance-covariance matrix, however we find that this variance-covariance matrix can become  ill-conditioned when $n$ is large, which leads to failure of the adjusted version of Durbin's algorithm used to evaluate the likelihood. This is explored in detail in Section~\ref{appsec:sec9}  of the Appendix and is consistent with existing research that finds poor performance of the Durbin-Levinson algorithm for ill-conditioned positive definite Toeplitz matrices \citep{Gohbert1995}.

\subsection{Adaptive Point Estimation of the Memory Parameter}

\begin{figure}[ht!]
\centering
\includegraphics[scale = 1]{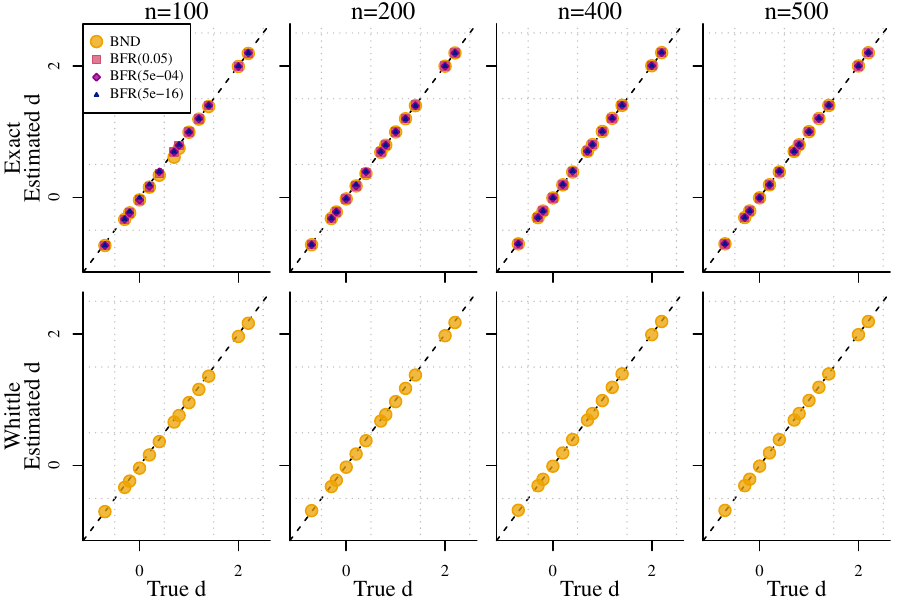}
\caption{Average BND and  BFR$\left(\epsilon\right)$ estimates of $d$ across $100$ ARFIMA$\left(0, d, 0\right)$ time series with $\mu = 0$ and $\sigma^2 = 1$ for different values of $\epsilon$, which determines the percentile used to choose $\bar{d}$. When estimating $d$, the $\mu$ and  $\sigma^2$ are treated as unknown. }
\label{fig:adests}
\end{figure}

The results of the previous section, in particular  the second row of Figure~\ref{fig:mean_ests} and the last two rows of Figure~\ref{fig:coverage},  motivate the need for the adaptive methods introduced in Section~\ref{subsec:adapt}.
We abbreviate the procedure obtained by setting $\epsilon = 0.5$  as  BND because it checks that the log-likelihood is decreasing in $d$ at the upper boundary $\bar{d}$. We  abbreviate procedures obtained by setting $\epsilon < 0.5$ as BFR$\left(\epsilon\right)$ because they ensure that a buffer between $\hat{d}$ and the upper boundary $\bar{d}$ by requiring that the approximate $\left(1 - \epsilon\right)\times100 \%$ percentile of $d$ be less than the upper boundary $\bar{d}$.  \color{black}The abbreviations BND and BFR are chosen to emphasize that the BND estimator prevents the estimator from being on the BouNDary and the BFR estimators ensure the presence of a BuFfeR between the estimators and the boundary. \color{black}
We consider three different values of $\epsilon = 5\times 10^{-s}$ corresponding to $s\in \left\{2, 4, 16\right\}$.
We note that $s = 16$ is the largest integer that returned a finite $\left(1 - 5\times 10^{-s}\right)\times100\%$ standard normal percentile, given the version of \texttt{R} we were using. The choice $s = 16$ requires that $\hat{d}$ be at least $8.014$ standard errors from the boundary $\bar{d}$. While this may seem extreme, it may be appropriate given that the standard error is an estimate of the standard deviation of the sampling distribution of $\hat{d}$ and given that a normal distribution may be a poor approximation of the sampling distribution of $\hat{d}$ in finite samples. 
Figure~\ref{fig:adests} shows average estimates of $d$ obtained using the BND and BFR$\left(\epsilon\right)$  exact and the BND Whittle  procedures. 
The BND  procedure  and all three BFR$\left(\epsilon\right)$  exact  procedures estimate $d$ well, regardless of the value of $\epsilon$ chosen or true value of the memory parameter. 
The BND Whittle procedure  produces excellent 
estimates of $d$ regardless of \color{black}its \color{black} true value, especially when $n$ is large.

\subsection{Adaptive Interval Estimation of the Memory Parameter}\label{subsec:adint}

\begin{figure}[ht!]
\centering
\includegraphics{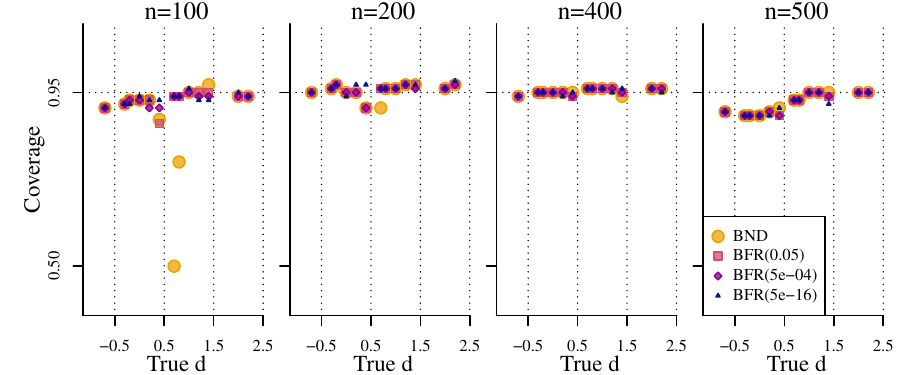}
\caption{Coverage of 95\% confidence intervals for  $d$ across 100 ARFIMA$\left(0, d, 0\right)$ time series with $\mu = 0$ and $\sigma^2 = 1$ for different values of $\epsilon$, which determines the percentile used to choose $\bar{d}$.  When estimating $d$, $\mu$ and $\sigma^2$ are treated as unknown. 
}
\label{fig:adcoverage}
\end{figure}

Last, we examine coverage of BND and BFR$\left(\epsilon\right)$ $95\%$ confidence intervals for $d$ obtained from the exact likelihood maximizing estimates.
The BND and BFR$\left(\epsilon\right)$ $95\%$ confidence intervals produce interval estimates whose coverage converges to $0.95$ as $n$ increases. 
BFR$\left(\epsilon\right)$ intervals improve most over BND intervals when the true memory parameter $d$ is near boundary values $0.5, 1.5, 2.5$ and $n$ is small. This is when BND estimates $\hat{d}_{\bar{d}}$ are more likely to be close to the boundary $\bar{d}$, where asymptotic normal approximations to the sampling distribution of $\hat{d}_{\bar{d}}$ are known to be poor.
Similar patterns are observed when comparing BFR$\left(\epsilon\right)$ intervals computed using smaller versus larger values of $\epsilon$. 
Additionally, the performance of BFR$\left(\epsilon\right)$ $95\%$ confidence intervals does not seem very sensitive to the choice of $\epsilon$, especially when $n$ is large. 
Figure~\ref{fig:adcoverage} shows that the best coverage rates were obtained by setting $\epsilon = 5\times 10^{-16}$.  This  suggests the simple strategy of choosing the smallest possible $\epsilon$.

\subsection{Comparison  to Alternatives}\label{subsec:comp}

We compare the absolute bias of the adaptive exact and Whittle  likelihood estimators  of $d$  to the absolute bias of two alternatives, one which treats the mean $\mu$ and variance $\sigma^2$ as unknown and one which treats the mean $\mu$ as known and equal to its true value and treats $\sigma^2$ as unknown. We compare to a CSS approximate likelihood estimator as described in \cite{Beran1995}, \cite{Hualde2011}, and \cite{Hualde2020} which treats the mean $\mu$ and variance $\sigma^2$ as unknown, and we compare to \color{black} what we call \color{black} the most favorable oracle implementation of the generalized minimum distance (GMD) estimator introduced by \cite{Mayoral2007}, which treats the mean $\mu$ as known and equal to its true value and treats $\sigma^2$ as unknown. The GMD estimator requires specification of the number of sample autocorrelations to include in the objective function, denoted by $k$. \color{black}We refer to our implementation as the most favorable oracle implementation because it chooses the value of sample autocorrelations $k$ that is most favorable, insofar as it yields the least biased estimator of $d$ for a given sample size $n$, as if it were an oracle, i.e. as if it knew the true bias associated with each value of $k$.  We do not consider the estimator introduced by \cite{Velasco2000} because simulation results from \cite{Mayoral2007} show that it is more biased and variable than the GMD estimator for these simulation settings. Figure~\ref{fig:biascomp} shows the average absolute bias of the BND exact estimator, the BFR$\left(5\times 10^{-16}\right)$ exact estimator and the BND Whittle estimator compared to the average absolute bias of the CSS estimator and the Best GMD estimator. Average absolute bias for the Best GMD estimator is reprinted from \cite{Mayoral2007}. As in this paper,  \cite{Mayoral2007} simulates ARFIMA$\left(p, d, q\right)$ time series with mean $0$ and error variance $\sigma^2 = 1$, but  performs 5,000 simulations  for each pair of memory parameter $d$ and sample size $n$ values. 

\begin{figure}[ht!]
\vspace{-7mm}
\centering
\includegraphics[scale = 1]{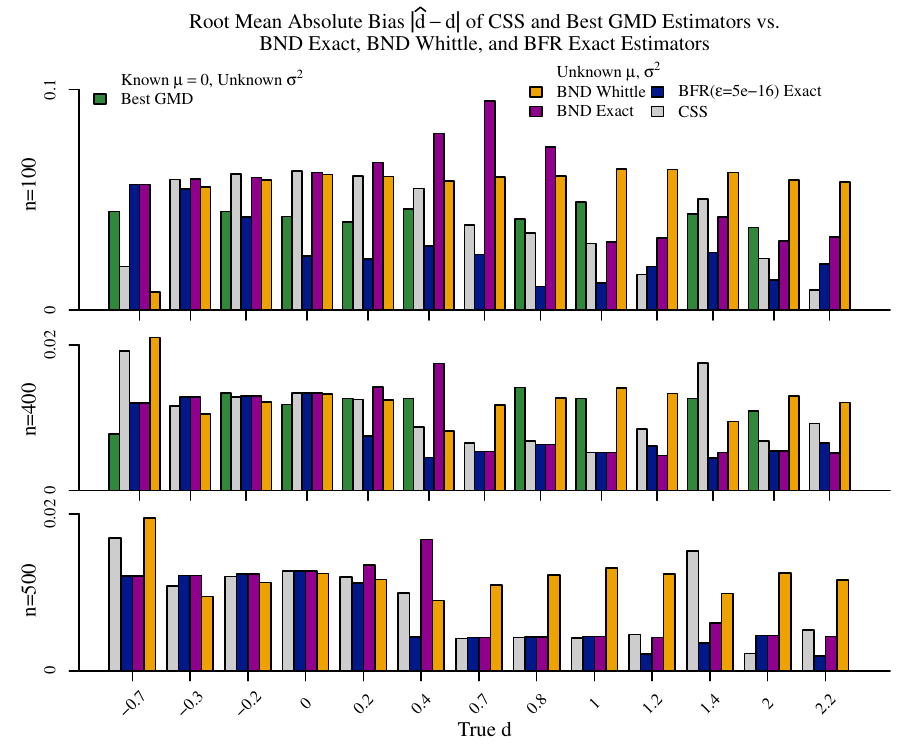} 
\caption{Root  \color{black}mean \color{black} absolute bias of CSS, BND, and BFR$\left(5\times 10^{-16}\right)$ estimators and the Best GMD estimator of $d$ across 100 and 5,000 ARFIMA$\left(0, d, 0\right)$ time series with mean $\mu = 0$ and variance $\sigma^2 = 1$, respectively, reprinted from \cite{Mayoral2007} for Best GMD. For CSS, BND, and BFR$\left(5\times 10^{-16}\right)$, we assume $\mu_t = \mu$ and treat $\mu$ and $\sigma^2$ as unknown.}
\label{fig:biascomp}
\end{figure}

Figure~\ref{fig:biascomp} shows that our adaptive exact estimators, especially the BFR$\left(5\times 10^{-16}\right)$ estimator, are less biased in many settings. Our adaptive Whittle likelihood estimator is the most biased estimator when $d \geq 0.5$. 
The BFR$\left(5\times 10^{-16}\right)$ outperforms the Best GMD estimator when $d \geq 0$ regardless of the sample size. As the sample size increases, the BND exact estimator outperforms the Best GMD estimator for all $d \geq 0$.
This is noteworthy given that the BND and BFR$\left(\epsilon\right)$ estimators summarized in Figure~\ref{fig:biascomp} treat the overall mean $\mu$  and variance $\sigma^2$  as unknown, whereas  the Best GMD estimator treats  the overall mean $\mu$ as constant and known to be equal to $0$  and the variance $\sigma^2$ as unknown.

Comparison to the CSS estimator suggests systematically poorer performance of the CSS estimator versus most adaptive estimators when the true value of $d$ is close to the boundaries of stationarity, $0.5$, and $1.5$, even as the sample size increases. 
When the sample size is smaller, the BFR$\left(5\times 10^{-16}\right)$ estimator performs better than the CSS estimator when $-0.3 \leq d \leq 1.4$, the BND exact estimator performs better  than the CSS estimator when $d \in \left\{-0.2, 0, 1.4\right\}$. As the sample size increases, the BFR$\left(5\times 10^{-16}\right)$ estimator outperforms the CSS estimator when $d \in \left\{-0.7, 0, 0.2, 0.4, 1.4\right\}$, the BND exact estimator outperforms the CSS estimator when $d \in \left\{-0.7, 0, 1.4\right\}$. 
We also find that the variability of our adaptive estimators is comparable to the variability of the CSS estimator and the Best GMD estimator for most sample sizes and true values of $d$.
Additionally, we find that our adaptive estimators of the noise variance $\sigma^2$, especially the BND exact and BFR$\left(5\times 10^{-16}\right)$ exact estimators of $\sigma^2$, tend to be less biased than the CSS estimator of $\sigma^2$ in the presence of long memory, even as the sample size increases. More details regarding variability of estimators and estimation of the noise variance are provided in Sections~\ref{appsec:sec3}  and \ref{appsec:sec4}  of the Appendix. We did not explore estimation of the overall mean $\mu$ because, like other estimators based on integer differenced data, the BND and BFR$\left(\epsilon\right)$ estimators of $\mu$ are not unique after differencing.  In additional simulations summarized in Sections~\ref{appsec:sec5}-\ref{appsec:sec7} of the Appendix, we find that the relative performances of the proposed estimators to each other and alternatives persist when the mean is assumed to be a quadratic time trend that is not cancelled by pre-differencing when $\bar{d} \leq 3.5$, and when observations are \color{black}  non-Gaussian, specifically when observations are simulated by replacing the mean zero variance $\sigma^2$ normal errors $z_t$  in \eqref{eq:arfima} with heavier-than-normal and lighter-than-normal tailed mean zero  variance $\sigma^2$ generalized normal distributed errors $z_t$ with generalized normal shape parameters $q = 1$ and $q = 6$ \citep{Griffin2018b}. The generalized normal distributions with shape parameter $q = 1$ and $q = 2$ are the Laplace and normal distributions, respectively\color{black}.

\section{Applications}\label{sec:app}

\subsection{Chemical Process Concentration and Temperature}

We begin by applying our methods to  fit ARFIMA$\left(0, d, 0\right)$ models with $\mu_t = \mu$ to  the examples discussed in \cite{Beran1995}, which considered the same problem as this paper: chemical process concentration readings (Series A) and  chemical process temperature readings (Series C). 
Descriptive plots of Series A and Series C are provided in Section~\ref{appsec:sec10}  of the Appendix \color{black} and a more detailed exploration of sharp jumps that are observed in the Series C data and their compatibility with the assumption of normal errors are provided in Sections~\ref{appsec:sec12} and \ref{appsec:sec13} of the Appendix. \color{black}
For both, we consider exact likelihood estimation of the memory parameter because we are interested in obtaining point and interval estimates. Based on the results of the simulation study which suggest that choosing the smallest possible $\epsilon$ yields the best performance, we compute BFR$\left(\epsilon =5\times 10^{-16}\right)$ exact likelihood estimates. Analogous results for approximate likelihoods and corresponding BND estimates are provided in Section~\ref{appsec:sec10}  of the Appendix.

\begin{table}[ht!]
\centering
\footnotesize
\begin{tabular}{ccc|rc|ccc|rc}
Series & $n$ & $\bar{d}$ & \multicolumn{1}{c}{$\hat{d}_{\bar{d}}$}  &95\% Interval for $d$ & Series & $n$ & $\bar{d}$ & \multicolumn{1}{c}{$\hat{d}_{\bar{d}}$}  & \multicolumn{1}{c}{95\% Interval for $d$}  \\ \hline 
\multirow{3}{*}{A} & \multirow{3}{*}{$197$} & $0.5$ & $0.400$ & $\left(0.304, 0.496\right)$  & \multirow{3}{*}{C} & \multirow{3}{*}{$226$} &  $0.5$ & $0.500$  & $-$\\ 
& & $1.5$ &\cellcolor{lightgray}$0.427$ & \cellcolor{lightgray}$\left(0.319, 0.534\right)$  & & & $1.5$ & $1.500$ &  $-$ \\
& & $2.5$ & $0.436$ &  $\left(0.326, 0.545\right)$ & & &$2.5$  & \cellcolor{lightgray} $1.788$ & \cellcolor{lightgray}$\left(1.659,    1.918\right)$ 
\\ \hline 
\end{tabular}
\caption{Estimates and intervals for $d$ for  Series A and C, treating $\mu$ and $\sigma^2$ as unknown. BFR$\left(\epsilon =5\times 10^{-16}\right)$ estimates in gray. Intervals \color{black}shown if \color{black} log-likelihood \color{black} is decreasing  \color{black} at $\bar{d}$.}
\label{tab:apppe}
\end{table}

Table~\ref{tab:apppe} shows exact estimates of $d$ for $\bar{d} \in \left\{0.5, 1.5, 2.5\right\}$ with BFR$\left(\epsilon\right)$ exact likelihood estimates highlighted in gray.  Exact profile log-likelihood curves for $\bar{d} \in \left\{0.5, 1.5, 2.5\right\}$, with $\mu$ and $\sigma^2$ profiled out are provided in Section~\ref{appsec:sec10} of the Appendix. The exact likelihood curves retain the same shape as $\bar{d}$ increases; the maximizing value of $d$ and the curvature about the maximizing value change little as $\bar{d}$ increases. 
 We compute BFR$\left(\epsilon\right)$ estimates setting $\epsilon =5\times 10^{-16}$ based on the simulation results discussed in Section~\ref{subsec:adint}.
From Table~\ref{tab:apppe}, we see that the choice of $\bar{d}$ affects the exact likelihood estimates of $d$ minimally and the 95\% intervals for the exact likelihood estimates of $d$   substantively. 
The 95\% intervals for $\bar{d} = 1.5$, the BFR$\left(\epsilon =5\times 10^{-16}\right)$ exact likelihood estimate of $d$, and $\bar{d} = 2.5$, for the Series A data contain $0.5$, which suggests that the Series A data may not be stationary. 
This is consistent with \cite{Beran1995}, which used  an  approximate likelihood method to obtain an estimate $\hat{d} = 0.41$ and a 95\% confidence interval of $\left(0.301, 0.519\right)$.

\begin{table}[ht!]
\centering
\footnotesize
\begin{tabular}{ccc|ccc}
& & & \multicolumn{3}{c}{95\% Interval} \\
Data & $n$ & $\bar{d}$ & $d$ & $\theta_1$ & $\phi_1$ \\ \hline \hline
\multirow{4}{*}{Series A} & \multirow{4}{*}{$197$} &0.5 & $\left(0.286, 0.553\right)$ & $\left(-0.227, 0.152\right)$ & $-$ \\ 
& & $1.5$ &\cellcolor{lightgray}$\left(0.298, 0.707\right)$ &\cellcolor{lightgray}$\left(-0.367, 0.134\right)$ &\cellcolor{lightgray}$-$  \\
& & $2.5$ & $\left(1.138, 1.490\right)$  & $\left(-1.014, -0.833\right)$ & $-$ \\
& & $3.5$ & $\left(1.125, 1.495\right)$  & $\left(-1.014, -0.809\right)$ & $-$ 
\\ \hline
\multirow{4}{*}{Series C} & \multirow{4}{*}{$226$} & $0.5$ & $-$ & $-$ & $-$  \\ 
& &$1.5$ &$\left(0.693, 1.208\right)$ &$-$ &$\left(0.682, 1.019\right)$ \\
& &$2.5$  &\cellcolor{lightgray}$\left(0.703, 1.241\right)$ &\cellcolor{lightgray}$-$ &\cellcolor{lightgray}$\left(0.662, 1.023\right)$ \\
& & $3.5$ & $\left(0.830, 1.113\right)$  &  $-$ & $\left(0.757, 0.946\right)$
\\ \hline
\end{tabular}
\caption{95\% intervals for exact likelihood estimates of the parameters of ARFIMA$\left(0, d, 1\right)$ and ARFIMA$\left(1, d, 0\right)$ processes  both with $\mu_t = \mu$  fit to the chemical process concentration readings (Series A) and chemical process temperature readings (Series C) for different values of $\bar{d}$. The BFR$\left(\epsilon =5\times 10^{-16}\right)$ exact likelihood estimates are highlighted in gray.  \color{black}Intervals shown if log-likelihood is decreasing  at $\bar{d}$.}
\label{tab:armaappci}
\end{table}

As in  \cite{Beran1995}, we also fit ARFIMA$\left(0, d, 1\right)$  and ARFIMA$\left(1, d, 0\right)$ models with $\mu_t = \mu$  to the Series A and C data, respectively,  for different values of $\bar{d}$. ARFIMA$\left(p, d, q\right)$  models  are more computationally challenging to fit than ARFIMA$\left(0, d, 0\right)$ models  because evaluating the likelihood is much computationally intensive and the likelihood can have many modes. 
Table~\ref{tab:armaapppe} shows the corresponding parameter estimates we obtain for both data sets for $\bar{d} \in \left\{0.5, 1.5, 2.5, 3.5\right\}$, with the BFR$\left(\epsilon =5\times 10^{-16}\right)$ exact likelihood estimates highlighted in gray. 
Examining the estimates of an ARFIMA$\left(0, d, 1\right)$ model for Series A, we observe striking changes in the exact likelihood estimates of $d$ and $\theta_1$ as $\bar{d}$ changes, and substantial differences between several of the exact and approximate maximum likelihood estimates. Furthermore, the 95\% intervals corresponding to the BFR$\left(\epsilon =5\times 10^{-16}\right)$ exact estimates for $d$ and $\theta_1$ shown in Table~\ref{tab:armaapppe} do not contain the exact likelihood estimates for $\bar{d} > 1.5$.
At the same time, the BFR$\left(\epsilon =5\times 10^{-16}\right)$ exact estimates for $d$ and $\theta_1$ align well with the estimates $\hat{d} = 0.445$, $\hat{\theta}_1 = -0.056$ and corresponding 95\% intervals $\left(0.261, 0.629\right)$ and $\left(-0.290, 0.179\right)$  found in \cite{Beran1995}.
This warrants more careful investigation.

\begin{table}[ht!]
\centering
\footnotesize
\begin{tabular}{cc|ccc|ccc}
 & & \multicolumn{3}{c|}{Estimate} & \multicolumn{3}{c}{95\% Interval} \\
Series & $\bar{d}$ & $d$ & $\theta_1$ & $\phi_1$  & $d$ & $\theta_1$ & $\phi_1$  \\ \hline 
\multirow{4}{*}{A} & $0.5$ & $0.419$ & $-0.037$ & $-$ & $\left(0.286, 0.553\right)$ & $\left(-0.227, 0.152\right)$ & $-$  \\ 
& $1.5$ 
&\cellcolor{lightgray}$0.502$  & \cellcolor{lightgray}$-0.117$ &  \cellcolor{lightgray}$-$ &\cellcolor{lightgray}$\left(0.298, 0.707\right)$ &\cellcolor{lightgray}$\left(-0.367, 0.134\right)$ &\cellcolor{lightgray}$-$\\
& $2.5$ & $1.314$  & $-0.923$ & $-$ & $\left(1.138, 1.490\right)$  & $\left(-1.014, -0.833\right)$ & $-$\\ 
& $3.5$ & $1.310$ & $-0.911$ & $-$   & $\left(1.125, 1.495\right)$  & $\left(-1.014, -0.809\right)$ & $-$ \\ \hline
\multirow{4}{*}{C} &  $0.5$ 
& $0.500$ & $-$& $1.000$ & $-$ & $-$ & $-$ \\ 
& $1.5$ & $0.950$ & $-$ & $0.850$ &$\left(0.693, 1.208\right)$ &$-$ &$\left(0.682, 1.019\right)$ \\
&$2.5$  & \cellcolor{lightgray}$0.972$ & \cellcolor{lightgray}$-$ &  \cellcolor{lightgray}$0.842$ &\cellcolor{lightgray}$\left(0.703, 1.241\right)$ &\cellcolor{lightgray}$-$ &\cellcolor{lightgray}$\left(0.662, 1.023\right)$  \\
& $3.5$  & $0.971$ &$-$ & $0.852$ & $\left(0.830, 1.113\right)$  &  $-$ & $\left(0.757, 0.946\right)$
\\ \hline
\end{tabular}
\caption{
Estimates  and intervals for  ARFIMA$\left(0, d, 1\right)$ and ARFIMA$\left(1, d, 0\right)$  models with $\mu_t = \mu$  for  Series A  and  C, respectively, treating $\mu$ and $\sigma^2$ as unknown. BFR$\left(\epsilon =5\times 10^{-16}\right)$ estimates given in gray. Intervals provided when the log-likelihood is decreasing at $\bar{d}$.
}
\label{tab:armaapppe}
\end{table}
\normalsize

\begin{figure}[ht!]
\centering
\includegraphics{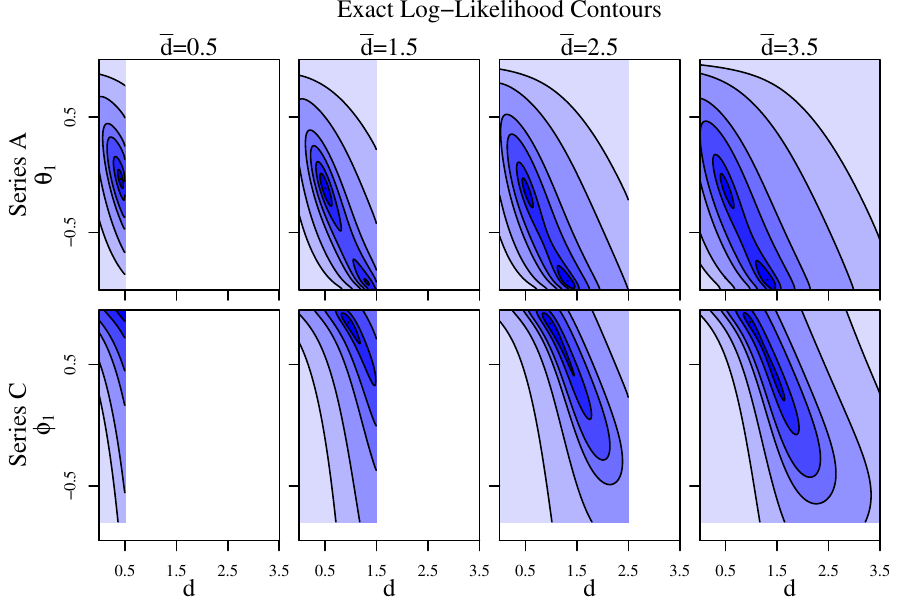}
\caption{Exact profile log-likelihoods for ARFIMA$\left(0, d, 1\right)$ and ARFIMA$\left(1, d, 0\right)$ models with $\mu_t = \mu$ for Series A and Series C data, respectively, with $\mu$ and $\sigma^2$ profiled out.}
\label{fig:seriesacar}
\end{figure}

The first row of  Figure~\ref{fig:seriesacar} shows the exact joint log-likelihoods for the Series A data as a function of the moving average parameter $\theta_1$ and $d$ for $\bar{d} \in \left\{0.5, 1.5, 2.5, 3.5\right\}$. \color{black}Estimates of $\theta_1$ and $d$ dramatically jump as $\bar{d}$ increases from 1.5 to 2.5.  We can see why by examining \eqref{eq:arfima}, specifically $\left(1 - B\right)^d  \left(y_t - \mu_{t} \right)= \left(1 + \theta_1 B\right)z_t$. For any value of $d \leq \bar{d} - 1$, the model obtained when  $\theta_1 = 0$ is equivalent to the model obtained when increasing the differencing parameter by 1 and setting $\theta_1 = -1$.  Thus, the \color{black} exact log-likelihood is multimodal when $\bar{d} > 0.5$, one with $d < 1$ and another with $d > 1$. Which mode maximizes the likelihood depends on the choice of $\bar{d}$; the maximum likelihood estimates for $\bar{d} \leq 1.5$ corresponds to the mode with $d < 1$, whereas the maximum likelihood estimate for $\bar{d} > 1.5$ corresponds to the mode with $d > 1$. This suggests possibly poor identifiability of the parameters of the ARIMA$\left(p, d, q\right)$ model even in simple cases with $p = 0$ and $q = 1$, which should be considered whenever an ARIMA$\left(p, d, q\right)$ is applied.

Table~\ref{tab:armaapppe} shows more stable estimates of the parameters of an ARFIMA$\left(1, d, 0\right)$ model for Series C   across values of $\bar{d}$. The BFR$\left(\epsilon =5\times 10^{-16}\right)$ exact likelihood estimates of $\hat{d}_{\bar{d}} = 0.972$ and $\hat{\phi}_{\bar{d},1} = 0.842$, with corresponding 95\% intervals of $\left(0.684, 1.261\right)$ and $\left(0.654, 1.031\right)$ are consistent with the approximate likelihood estimates $\hat{d} = 0.905$ and $\hat{\phi}_1 = 0.864$  and 95\% confidence intervals $\left(0.662, 1.148\right)$ and $\left(0.708, 1.00\right)$ provided in \cite{Beran1995}.
Although these estimates are more stable, the joint log-likelihoods in  the second row of  Figure~\ref{fig:seriesacar} are banana shaped near the maximum. Again, this reflects somewhat poor identifiability of $d$ and $\phi_1$ and further emphasizes the need for care when applying ARIMA$\left(p, d, q\right)$ models.
 
\subsection{$\text{CO}_2$ Emissions}

\cite{Barassi2018} uses long memory models to assess whether deviations of relative per capita CO$_2$ emissions of 28 OECD countries from country-specific linear time trends are mean reverting. 
Letting $y_{tc}$ be the relative per capita CO$_2$ emissions of country $c$ at time $t$ as defined in \cite{Barassi2018}, we assume that deviations of the relative per capita CO$_2$ emissions of each country $y_{tc}$ from a  country-specific linear time trend $ \mu_{tc} = \mu_c +  \beta_c t$ are an ARFIMA$\left(0, d_c, 0\right)$ process, with $\left(1 - B\right)^{d_c} \left(y_{tc}  - \mu_c - \beta_c t \right)=z_{tc}$, $z_{tc} \stackrel{i.i.d.}{\sim}\mathcal{N}\left(0, \sigma^2_c\right)$, and error variances $\sigma^2_c$. Under this model, country $c$'s relative per capita CO$_2$ emissions are mean reverting if $d_c < 1$. They assess whether each country's relative per capita CO$_2$ emissions are mean reverting by testing the null hypothesis $d_c \geq 1$ for each country. \color{black} Descriptive plots of the data are provided in Sections \ref{appsec:sec14} and \ref{appsec:sec15} of the Appendix.\color{black}

\begin{figure}[ht!]
\centering
\includegraphics{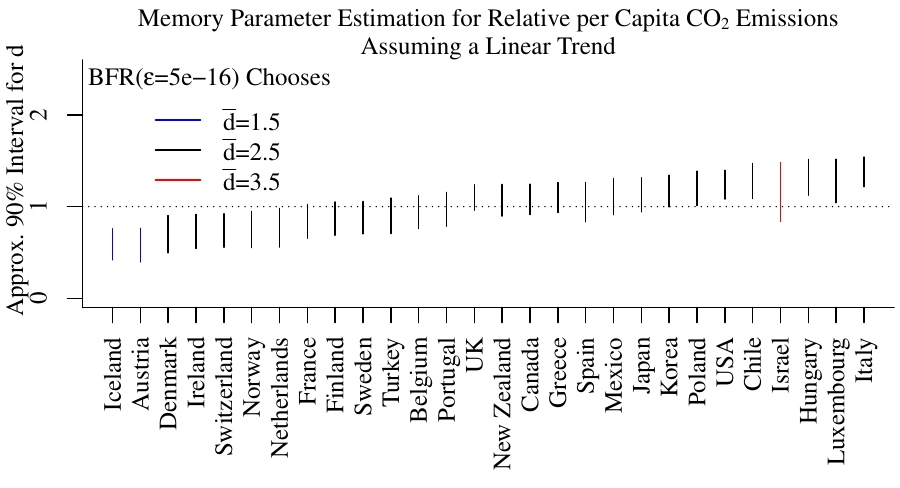}
\caption{90\% intervals for BFR$\left(\epsilon =5\times 10^{-16}\right)$ exact likelihood  estimates  of $d$.}
\label{fig:emissions}
\end{figure}

Treating the country-specific means $\mu_c$, slopes $\beta_c$, and variances $\sigma^2_c$ as unknown, we perform level $\alpha = 0.05$ tests of the null hypotheses that $d_c \geq 1$ versus the alternative that $d_c < 1$ for each country by comparing the upper bound of the 90\% confidence interval for the BFR$\left(\epsilon =5\times 10^{-16}\right)$ exact likelihood estimate $\hat{d}_c$ to $1$ and rejecting if it fails to exceed $1$. Figure~\ref{fig:emissions} shows the 90\% confidence intervals for the BFR$\left(\epsilon =5\times 10^{-16}\right)$ exact likelihood estimates for all 28 OECD countries. We reject the null hypothesis that $d_c \geq 1$ for Iceland, Austria, Denmark, Ireland, Switzerland, Norway, and the Netherlands.  This is largely consistent with  \cite{Barassi2018}, which reports strong evidence of mean reversion for Austria, Denmark, Finland, Iceland, Ireland, Israel, Norway and Switzerland based on a battery of alternative methods, many of which depend on the choice of several tuning parameters.
Unlike \cite{Barassi2018}, which finds strong evidence that Israel's per capita CO$_2$ emissions are mean reverting, we are not able to estimate the memory parameter for Israel with enough precision to reject the null hypothesis that  $d_c \geq 1$.

\subsection{ECIS Measurements}

Last, we compute BFR$\left(\epsilon =5\times 10^{-16}\right)$ exact likelihood estimates of the memory parameter for time series of ECIS measurements. ECIS monitors the growth and behavior of cells in culture \citep{Giaever1991}.
Previous research suggests that ARFIMA$\left(0, d, 0\right)$ models may be appropriate for ECIS measurements and hypothesizes that the memory parameter $d$ may vary by cell type and contamination status \citep{Giaever1991,Tarantola2010,Gelsinger2019, Zhang2020}. \color{black} As in \cite{Zhang2020}, w\color{black}e consider ECIS measurements from eight experiments, four measuring Madin-Darby canine kidney (MDCK) cells and four measuring African green monkey kidney epithelial (BSC-1) cells. \color{black}Plots of selected time series are provided in Section~\ref{appsec:sec16} of the Appendix. \color{black} From each experiment, we examine 40 individual time series of about $n \approx 170$ measurements, corresponding to measurements from 40 cell filled wells on a single tray collected between 40 and 72 hours after the wells were filled and the experiment began. Of the 40 wells in each experiment, half were prepared using one medium, BSA, and half were prepared using another medium, gel. Of the 20 wells prepared with the same medium, 12 contain cells contaminated by mycoplasma and 8 contain uncontaminated cells.
Letting $y_{temwf}$ refer to ECIS measurements for well $w$ prepared with medium $m$ in experiment $e$ measured at frequency $f$, we assume that  the mean $\mu_{temwf} = \mu_{emwf}$ is constant and 
\begin{align*}
\begin{array}{ll}
\left(1 - B\right)^{d_{emf}^{\left(1\right)}} \left(y_{temwf}  - \mu_{emwf}\right)=z_{temwf} & \text{ for wells $w$ with contaminated cells and} \\
\left(1 - B\right)^{d_{emf}^{\left(0\right)}} \left(y_{temwf}  - \mu_{emwf}\right)=z_{temwf} & \text{ otherwise,}
\end{array}
\end{align*}
where $z_{temwf} \stackrel{i.i.d.}{\sim}\mathcal{N}\left(0, \sigma^2_{emwf}\right)$.
We obtain pairs of estimates of the memory parameters $d^{\left(0\right)}_{emf}$ and $d^{\left(1\right)}_{emf}$ for each experiment, medium, and frequency.

\begin{sidewaysfigure}
\centering
\includegraphics[scale = 1]{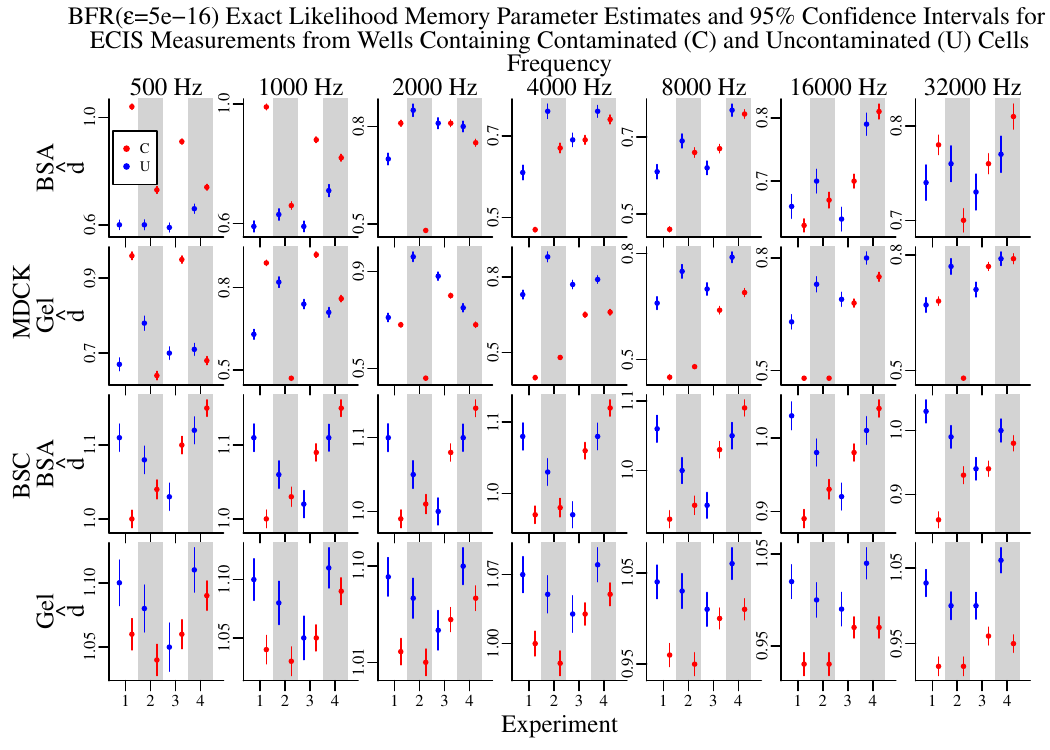}
\caption{BFR$\left(\epsilon =5\times 10^{-16}\right)$ exact likelihood estimates and  95\% intervals for  memory parameters $d^{\left(0\right)}_{emf}$ (blue) and $d^{\left(1\right)}_{emf}$ (red).  When estimating  $d^{\left(0\right)}_{emf}$ and $d^{\left(1\right)}_{emf}$, $\mu_{emwf}$ and $\sigma^2_{emwf}$ are treated as unknown. }
\label{fig:cells}
\end{sidewaysfigure}

Figure~\ref{fig:cells} suggests that ECIS measurements' long memory behavior may vary with contamination status. We see longer memory in measurements of contaminated MDCK cells using BSA at 32,000  Hz and shorter memory in measurements of contaminated MDCK cells using gel at 4,000 and 8,000 Hz, contaminated BSC-1 cells using BSA at  32,000  Hz, and contaminated BSC-1 cells using gel at all frequencies, especially 32,000 Hz. This suggests that long memory behavior of ECIS measurements may help distinguish contaminated from uncontaminated cells, although the sample sizes are small and the scope is limited.

\section{Conclusion}

We make a simple but powerful observation that allows us to  perform exact likelihood estimation of  the parameters of ARFIMA$\left(p, d, q\right)$ models without restricting the memory parameter $d$ to the range of values that correspond to a stationary ARFIMA$\left(p, d, q\right)$ process. We introduce adaptive procedures  for specifying an upper bound for the memory parameter   and demonstrate the utility of these procedures via simulation studies and applications, both to canonical datasets that have been explored in the  ARFIMA$\left(p, d, q\right)$ literature and to more modern datasets.  This allows us to identify situations where approximate likelihood methods perform well relative to exact likelihood methods and where they do not.

There are  many  future directions to pursue. Simulation studies investigating the finite sample properties of our proposed estimators when $p + q > 0$ would be valuable for assessing the relative merits of the proposed approach compared to alternatives. 
Alternative approaches to approximating standard errors and confidence intervals may provide better standard error and interval estimates.
We have used numerical differentiation to approximate standard errors and confidence intervals. As shown in Section~\ref{appsec:sec8}  of the Appendix, confidence intervals corresponding to BND and BFR$\left(\epsilon\right)$ exact likelihood estimators perform well on average across the simulations considered in this paper. However, they can be unstable in individual simulations.
Asymptotic or bootstrap-based approximate standard errors and confidence intervals may perform better.  
Replacing the adjusted version of Durbin's algorithm used to evaluate the likelihood with an alternative algorithm that  is more suitable for ill conditioned covariance matrices, e.g. the algorithm described in \citep{Gohbert1995},  may improve performance of the exact likelihood estimator for large maximum values $\bar{d}$ when the true value of the  memory  parameter $d$ is much smaller than $\bar{d}$, thus reducing the need for an adaptive approach altogether. The performance of our proposed estimators for non-Gaussian processes in simulations \color{black} and the evidence of non-Gaussian errors for the time series data considered in this paper \color{black} suggests the value of extending the consistency and asymptotic normality results provided in \cite{Lieberman2012} to non-Gaussian processes \color{black} and the value of generalizing our proposed method to allow for certain types of non-Gaussian errors\color{black}. Incorporating approximate sampling distributions  for  Whittle estimators, e.g. those described in \cite{Velasco2000},  may yield BFR$\left(\epsilon\right)$ Whittle estimators that outperform the BND Whittle estimators considered here. Alternative likelihood approximations may improve estimation of the memory parameter. Using one of the novel approximations introduced in \cite{Jesus2017}, \cite{Sykulski2019}, or \cite{Das2020} in place of the Whittle likelihood,  which is known to produce biased estimates \citep{Contreras-Cristan2006}, may yield better approximate estimators. Our approach may also provide a useful framework for exact likelihood estimation of models that allow the differencing parameter to vary over time and multivariate fractional differencing models as considered in \cite{Graves2015} and  \cite{Nielsen2015}, respectively.

\bigskip
\begin{center}
{\large\bf SUPPLEMENTARY MATERIAL}
\end{center}

A stand-alone package for implementing the methods described in this paper can be downloaded from \url{https://github.com/maryclare/nslm}.

\bibliographystyle{chicago}

\bibliography{LongMemory}

\begin{thebibliography}{}

\bibitem[\protect\citeauthoryear{Barassi, Spagnolo, and Zhao}{Barassi
  et~al.}{2018}]{Barassi2018}
Barassi, M.~R., N.~Spagnolo, and Y.~Zhao (2018).
\newblock {Fractional Integration Versus Structural Change: Testing the
  Convergence of CO$_2$ Emissions}.
\newblock {\em Environmental and Resource Economics\/}~{\em 71\/}(4), 923--968.

\bibitem[\protect\citeauthoryear{Beran}{Beran}{1995}]{Beran1995}
Beran, J. (1995).
\newblock {Maximum Likelihood Estimation of the Differencing Parameter for
  Invertible Short and Long Memory Autoregressive Integrated Moving Average
  Models}.
\newblock {\em Journal of the Royal Statistical Society. Series B
  (Methodological)\/}~{\em 57\/}(4), 659--672.

\bibitem[\protect\citeauthoryear{Box and Jenkins}{Box and
  Jenkins}{1970}]{Box1970}
Box, G. E.~P. and G.~M. Jenkins (1970).
\newblock {\em {Time Series Analysis: Forecasting and Control}}.
\newblock San Francisco: Holden Day.

\bibitem[\protect\citeauthoryear{Box-Steffensmeier and Smith}{Box-Steffensmeier
  and Smith}{1998}]{Box-Steffensmeier1998}
Box-Steffensmeier, J.~M. and R.~M. Smith (1998).
\newblock {Investigating Political Dynamics Using Fractional Integration
  Methods}.
\newblock {\em American Journal of Political Science\/}~{\em 42\/}(2),
  661--689.

\bibitem[\protect\citeauthoryear{Byers, Davidson, and Peel}{Byers
  et~al.}{1997}]{Byers1997}
Byers, D., J.~Davidson, and D.~Peel (1997).
\newblock {Modelling political popularity: An analysis of long-range dependence
  in opinion poll series}.
\newblock {\em Journal of the Royal Statistical Society. Series A: Statistics
  in Society\/}~{\em 160\/}(3), 471--490.

\bibitem[\protect\citeauthoryear{Byers, Davidson, and Peel}{Byers
  et~al.}{2000}]{Byers2000}
Byers, D., J.~Davidson, and D.~Peel (2000).
\newblock {The dynamics of aggregate political popularity: Evidence from eight
  countries}.
\newblock {\em Electoral Studies\/}~{\em 19\/}(1), 49--62.

\bibitem[\protect\citeauthoryear{Contreras-Crist{\'{a}}n,
  Guti{\'{e}}rrez-Pe{\~{n}}a, and Walker}{Contreras-Crist{\'{a}}n
  et~al.}{2006}]{Contreras-Cristan2006}
Contreras-Crist{\'{a}}n, A., E.~Guti{\'{e}}rrez-Pe{\~{n}}a, and S.~G. Walker
  (2006).
\newblock {A note on Whittle's likelihood}.
\newblock {\em Communications in Statistics: Simulation and Computation\/}~{\em
  35\/}(4), 857--875.

\bibitem[\protect\citeauthoryear{Dahlhaus}{Dahlhaus}{1989}]{Dahlhaus1989}
Dahlhaus, R. (1989).
\newblock {Efficient Parameter Estimation for Self-Similar Processes}.
\newblock {\em The Annals of Statistics\/}~{\em 17\/}(4), 1749--1766.

\bibitem[\protect\citeauthoryear{Das and Yang}{Das and Yang}{2020}]{Das2020}
Das, S. and J.~Yang (2020).
\newblock {Spectral methods for small sample time series: A complete
  periodogram approach}.
\newblock {\em arXiv: 2007.00363\/}.

\bibitem[\protect\citeauthoryear{Dolado, Gonzalo, and Mayoral}{Dolado
  et~al.}{2003}]{Dolado2003}
Dolado, J.~J., J.~Gonzalo, and L.~Mayoral (2003).
\newblock {Long-range dependence in Spanish political opinion poll series}.
\newblock {\em Journal of Applied Econometrics\/}~{\em 18\/}(2), 137--155.

\bibitem[\protect\citeauthoryear{Doornik and Ooms}{Doornik and
  Ooms}{2003}]{Doornik2003}
Doornik, J.~A. and M.~Ooms (2003).
\newblock {Computational aspects of maximum likelihood estimation of
  autoregressive fractionally integrated moving average models}.
\newblock {\em Computational Statistics and Data Analysis\/}~{\em 42\/}(3),
  333--348.

\bibitem[\protect\citeauthoryear{Durham, Geweke, Porter-Hudak, and
  Sowell}{Durham et~al.}{2019}]{Durham2019}
Durham, G., J.~Geweke, S.~Porter-Hudak, and F.~Sowell (2019).
\newblock {Bayesian Inference for ARFIMA Models}.
\newblock {\em Journal of Time Series Analysis\/}~{\em 40}, 388--410.

\bibitem[\protect\citeauthoryear{Gelsinger, Tupper, and Matteson}{Gelsinger
  et~al.}{2020}]{Gelsinger2019}
Gelsinger, M.~L., L.~L. Tupper, and D.~S. Matteson (2020).
\newblock {Cell Line Classification Using Electric Cell-Substrate Impedance
  Sensing (ECIS)}.
\newblock {\em International Journal of Biostatistics\/}~{\em 16\/}(1), 1--12.

\bibitem[\protect\citeauthoryear{Giaever and Keese}{Giaever and
  Keese}{1991}]{Giaever1991}
Giaever, I. and C.~R. Keese (1991).
\newblock {Micromotion of mammalian cells measured electrically}.
\newblock {\em Proceedings of the National Academy of Sciences of the United
  States of America\/}~{\em 88\/}(17), 7896--7900.

\bibitem[\protect\citeauthoryear{Gohbert, Kailath, and Olshevsky}{Gohbert
  et~al.}{1995}]{Gohbert1995}
Gohbert, I., T.~Kailath, and V.~Olshevsky (1995).
\newblock {Fast Gaussian Elimination with Partial Pivoting for Matrices with
  Displacement Structure}.
\newblock {\em Mathematics of Computation\/}~{\em 64\/}(212), 1557--1576.

\bibitem[\protect\citeauthoryear{Granger}{Granger}{1980}]{Granger1980a}
Granger, C.~W. (1980).
\newblock {Long memory relationships and the aggregation of dynamic models}.
\newblock {\em Journal of Econometrics\/}~{\em 14\/}(2), 227--238.

\bibitem[\protect\citeauthoryear{Graves, Gramacy, Franzke, and Watkins}{Graves
  et~al.}{2015}]{Graves2015}
Graves, T., R.~B. Gramacy, C.~L.~E. Franzke, and N.~W. Watkins (2015).
\newblock {Efficient Bayesian inference for natural time series using ARFIMA
  processes}.
\newblock {\em Nonlinear Processes in Geophysics\/}~{\em 22}, 679--700.

\bibitem[\protect\citeauthoryear{Griffin}{Griffin}{2018}]{Griffin2018b}
Griffin, M. (2018).
\newblock {\em {gnorm: Generalized Normal/Exponential Power Distribution}}.
\newblock R package version 1.0.

\bibitem[\protect\citeauthoryear{Hosking}{Hosking}{1981}]{Hosking1981}
Hosking, J. R.~M. (1981).
\newblock {Fractional Differencing}.
\newblock {\em Biometrika\/}~{\em 68\/}(1), 165--176.

\bibitem[\protect\citeauthoryear{Hualde and Nielsen}{Hualde and
  Nielsen}{2020}]{Hualde2020}
Hualde, J. and M.~{\O}. Nielsen (2020).
\newblock {Truncated Sum of Squares Estimation of Fractional Time series Models
  with Deterministic Trends}.
\newblock {\em Econometric Theory\/}~{\em 36\/}(4), 751--772.

\bibitem[\protect\citeauthoryear{Hualde and Robinson}{Hualde and
  Robinson}{2011}]{Hualde2011}
Hualde, J. and P.~M. Robinson (2011).
\newblock {Gaussian pseudo-maximum likelihood estimation of fractional time
  series models}.
\newblock {\em Annals of Statistics\/}~{\em 39\/}(6), 3152--3181.

\bibitem[\protect\citeauthoryear{Hurvich and Chen}{Hurvich and
  Chen}{2000}]{Hurvich1998}
Hurvich, C.~M. and W.~W. Chen (2000).
\newblock {An Efficient Taper for Potentially Overdifferenced Long-Memory Time
  Series}.
\newblock {\em Journal of Time Series Analysis\/}~{\em 21\/}(2), 155--180.

\bibitem[\protect\citeauthoryear{Hurvich and Ray}{Hurvich and
  Ray}{1995}]{Hurvich1995}
Hurvich, C.~M. and B.~K. Ray (1995).
\newblock {Estimation of the Memory Parameter For Nonstationary or
  Noninvertible Fractionally Integrated Processes}.
\newblock {\em Journal of Time Series Analysis\/}~{\em 16\/}(1), 17--41.

\bibitem[\protect\citeauthoryear{Jesus and Chandler}{Jesus and
  Chandler}{2017}]{Jesus2017}
Jesus, J. and R.~E. Chandler (2017).
\newblock {Inference with the Whittle Likelihood: A Tractable Approach Using
  Estimating Functions}.
\newblock {\em Journal of Time Series Analysis\/}~{\em 38\/}(2), 204--224.

\bibitem[\protect\citeauthoryear{Johansen and Nielsen}{Johansen and
  Nielsen}{2016}]{Johansen2016}
Johansen, S. and M.~{\O}. Nielsen (2016).
\newblock {The Role of Initial Values in Conditional Sum-of-Squares Estimation
  of Nonstationary Fractional Time Series Models}.
\newblock {\em Econometric Theory\/}~{\em 32\/}(5), 1095--1139.

\bibitem[\protect\citeauthoryear{Lieberman, Rosemarin, and Rousseau}{Lieberman
  et~al.}{2012}]{Lieberman2012}
Lieberman, O., R.~Rosemarin, and J.~Rousseau (2012).
\newblock {Asymptotic Theory for Maximum Likelihood Estimation of the Memory
  Parameter in Stationary Gaussian Processes}.
\newblock {\em Econometric Theory\/}~{\em 28\/}(2), 457--470.

\bibitem[\protect\citeauthoryear{Mayoral}{Mayoral}{2007}]{Mayoral2007}
Mayoral, L. (2007).
\newblock {Minimum distance estimation of stationary and non-stationary ARFIMA
  processes}.
\newblock {\em Econometrics Journal\/}~{\em 10\/}(1), 124--148.

\bibitem[\protect\citeauthoryear{Nielsen}{Nielsen}{2015}]{Nielsen2015}
Nielsen, M.~{\O}. (2015).
\newblock {Asymptotics for the conditional-sum-of-squares estimator in
  multivariate fractional time-series models}.
\newblock {\em Journal of Time Series Analysis\/}~{\em 36\/}(2), 154--188.

\bibitem[\protect\citeauthoryear{Odaki}{Odaki}{1993}]{Odaki1993}
Odaki, M. (1993).
\newblock {On the Invertibility of Fractionally Differenced ARIMA Processes}.
\newblock {\em Biometrika\/}~{\em 80\/}(3), 703--709.

\bibitem[\protect\citeauthoryear{Pipiras and Taqqu}{Pipiras and
  Taqqu}{2017}]{Pipiras2017}
Pipiras, V. and M.~S. Taqqu (2017).
\newblock {\em {Long-Range Dependence and Self-Similarity}}.
\newblock Cambridge: Cambridge University Press.

\bibitem[\protect\citeauthoryear{Sowell}{Sowell}{1992}]{Sowell1992a}
Sowell, F. (1992).
\newblock {Maximum likelihood estimation of stationary univariate fractionally
  integrated time series models}.
\newblock {\em Journal of Econometrics\/}~{\em 53}, 165--188.

\bibitem[\protect\citeauthoryear{Sykulski, Olhede, Guillaumin, Lilly, and
  Early}{Sykulski et~al.}{2019}]{Sykulski2019}
Sykulski, A.~M., S.~C. Olhede, A.~P. Guillaumin, J.~M. Lilly, and J.~J. Early
  (2019).
\newblock {The debiased Whittle likelihood}.
\newblock {\em Biometrika\/}~{\em 106\/}(2), 251--266.

\bibitem[\protect\citeauthoryear{Tarantola, Marel, Sunnick, Adam, Wegener, and
  Janshoff}{Tarantola et~al.}{2010}]{Tarantola2010}
Tarantola, M., A.~K. Marel, E.~Sunnick, H.~Adam, J.~Wegener, and A.~Janshoff
  (2010).
\newblock {Dynamics of human cancer cell lines monitored by electrical and
  acoustic fluctuation analysis}.
\newblock {\em Integrative Biology\/}~{\em 2\/}(2-3), 139--150.

\bibitem[\protect\citeauthoryear{Velasco and Robinson}{Velasco and
  Robinson}{2000}]{Velasco2000}
Velasco, C. and P.~M. Robinson (2000).
\newblock {Whittle Pseudo-Maximum Likelihood Estimation for Nonstationary Time
  Series}.
\newblock {\em Journal of the American Statistical Association\/}~{\em
  95\/}(452), 1229--1243.

\bibitem[\protect\citeauthoryear{Zhang, Griffin, and Matteson}{Zhang
  et~al.}{2023}]{Zhang2020}
Zhang, W., M.~Griffin, and D.~Matteson (2023).
\newblock {Modeling a Nonlinear Biophysical Trend Followed by Long-Memory
  Equilibrium with Unknown Change Point}.
\newblock {\em Annals of Applied Statistics\/}~{\em 17\/}(1), 860--880.

\end{thebibliography}

\begin{appendix}

\section{Continuity of $l_{\bar{d}}\left(\bs x^{\left(m_{\bar{d}}\right)}| d, \mu_{t}, \sigma, \bs \theta, \bs \phi \right)$}\label{appsec:sec1}

\begin{theorem}
The log-likelihood $l_{\bar{d}}\left(\bs x^{\left(m_{\bar{d}}\right)}| d, \mu_{t}, \sigma, \bs \theta, \bs \phi \right)$ is continuous for $-0.5 + m_{\bar{d}} - k\leq d < \bar{d}$. 
\end{theorem}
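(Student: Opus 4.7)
The plan is to recognize that the piecewise construction of $l_{\bar{d}}$ is a computational device: across the pieces $j = 0, 1, \dots, k$, the underlying Gaussian law of the differenced data $\bs x^{(m_{\bar{d}})}$ is the same, so $l_{\bar{d}}$ is really a single function of $d$ that is merely being evaluated through different stationary ARFIMA representations. I would organize the argument in four steps: (i) write down the spectral density of the stationary ARFIMA$(p, d - m_{\bar{d}} + j, j + q)$ model used in piece $j$; (ii) show that it has a closed form that does not depend on $j$; (iii) deduce that the implied covariance matrix $\Sigma(d)$ of $\bs x^{(m_{\bar{d}})}$ is continuous in $d$ on the whole range $[-0.5 + m_{\bar{d}} - k, \bar{d})$; and (iv) conclude continuity of the Gaussian log-likelihood.

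The crux is step (ii). In piece $j$ the MA polynomial is $(1-B)^j \theta(B)$ and the fractional-differencing exponent is $d - m_{\bar{d}} + j$, so the spectral density of the differenced process is
\[
f_j(\omega; d) = \frac{\sigma^2}{2\pi} \left|\frac{\theta(e^{-i\omega})}{\phi(e^{-i\omega})}\right|^2 |1 - e^{-i\omega}|^{2j - 2(d - m_{\bar{d}} + j)} = \frac{\sigma^2}{2\pi} \left|\frac{\theta(e^{-i\omega})}{\phi(e^{-i\omega})}\right|^2 |1 - e^{-i\omega}|^{2(m_{\bar{d}} - d)},
\]
which does not depend on $j$; call the common spectral density $f(\omega; d)$. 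Because the roots of $\phi$ and $\theta$ lie strictly outside the unit disk, the rational factor is bounded above and away from zero uniformly in $\omega$, and $|1 - e^{-i\omega}|^{2(m_{\bar{d}} - d)}$ is integrable for every $d$ in the stated range (where the exponent is strictly greater than $-1$ by the choice of $m_{\bar{d}}$).

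For step (iii), I would fix an arbitrary compact subinterval $[a, b] \subset [-0.5 + m_{\bar{d}} - k, \bar{d})$, bound $f(\omega; d)$ by an integrable function uniformly in $d \in [a, b]$, and invoke dominated convergence on $\gamma(h; d) = \int_{-\pi}^{\pi} f(\omega; d) e^{i\omega h}\, d\omega$ to conclude that each autocovariance, and hence each entry of $\Sigma(d)$, is continuous in $d$. Since $f(\omega; d)$ is positive almost everywhere, $\Sigma(d)$ is positive definite, so $\Sigma(d)^{-1}$ and $\log\det \Sigma(d)$ are also continuous in $d$, giving continuity of the Gaussian log-likelihood $-\tfrac{n - m_{\bar{d}}}{2}\log(2\pi) - \tfrac{1}{2}\log\det\Sigma(d) - \tfrac{1}{2}(\bs x^{(m_{\bar{d}})} - \bs\mu^{(m_{\bar{d}})})^\top \Sigma(d)^{-1}(\bs x^{(m_{\bar{d}})} - \bs\mu^{(m_{\bar{d}})})$. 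The step I expect to be most delicate is really the bookkeeping in step (ii): verifying that the stationary ARFIMA likelihood produced by treating $(1-B)^j \theta(B)$ as a degree-$(j+q)$ constrained MA polynomial with differencing $d - m_{\bar{d}} + j$ genuinely corresponds to the Toeplitz covariance matrix attached to $f(\omega; d)$, and not merely to a different parameterization. Once that identification is pinned down, continuity across the boundaries $d = m_{\bar{d}} - j + 0.5$ is automatic, because the covariance matrix entering the likelihood is literally the same object on either side, and continuity within each piece reduces to the dominated-convergence computation above.
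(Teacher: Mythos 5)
Your proposal is correct, and it rests on exactly the same identity as the paper's proof: the factor $\left|1 - e^{-i\omega}\right|^{2j}$ contributed by the constrained MA polynomial $\left(1-B\right)^j\theta\left(B\right)$ cancels against the $j$-dependence of the fractional exponent, so the spectral density of the piece-$j$ representation is $\frac{\sigma^2}{2\pi}\left|\theta/\phi\right|^2\left|1-e^{-i\omega}\right|^{2\left(m_{\bar{d}}-d\right)}$ regardless of $j$. Where you differ is in how that identity is deployed. The paper argues piecewise: continuity on each interval $\left[-0.5+m_{\bar{d}}-j,\ 0.5+m_{\bar{d}}-j\right)$ is obtained by citing \cite{Dahlhaus1989}, and the identity is then used only to match one-sided limits of the autocovariances at the seam points $d=-0.5+m_{\bar{d}}-j$. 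You instead observe that the identity makes the covariance matrix a single well-defined function of $d$ on the entire range, and you prove its continuity everywhere at once by dominating $\left|1-e^{-i\omega}\right|^{2\left(m_{\bar{d}}-d\right)}$ on compact subintervals and applying dominated convergence to $\gamma\left(h;d\right)$. Your route is more self-contained (it does not need the external continuity result for the interior of each piece) and makes the seams disappear rather than treating them as special points; the paper's route is shorter on the page because it outsources the interior continuity and only computes at the boundaries. The one point you rightly flag as delicate --- that the exact likelihood evaluated in piece $j$ really is the Gaussian likelihood attached to the Toeplitz matrix of Fourier coefficients of $f\left(\omega;d\right)$ --- is precisely what the paper handles by appealing to the derivations in \cite{Sowell1992a}, noting that those derivations tolerate MA roots on the unit circle and exponents down to $-1$; you should make that citation (or an equivalent verification) explicit to close the argument. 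Your integrability check (exponent $2\left(m_{\bar{d}}-d\right)>-1$ since $d<\bar{d}<m_{\bar{d}}+0.5$) and the positive-definiteness of $\Sigma\left(d\right)$ from $f>0$ a.e.\ are both correct.
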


\begin{proof} 
Continuity of $l_{\bar{d}}\left(\bs x^{\left(m_{\bar{d}}\right)}| d, \mu_{t}, \sigma, \bs \theta, \bs \phi \right)$ on the intervals $\left[-0.5 + m_{\bar{d}} - j, 0.5 + m_{\bar{d}} - j \right)$ for $j = 0, \dots, k$ follows from continuity of the log-likelihood of a stationary ARFIMA$\left(p, d, q\right)$ process \citep{Dahlhaus1989}. Note that \cite{Dahlhaus1989} requires stationarity of the ARFIMA$\left(p, d, q\right)$ but does not require that roots of the moving average polynomial $\theta\left(x\right)$ lie strictly outside the unit circle.

Continuity of $l_{\bar{d}}\left(\bs x^{\left(m_{\bar{d}}\right)}| d, \mu_{t}, \sigma, \bs \theta, \bs \phi \right)$ at $d = -0.5 + m_{\bar{d}} - j$ for $j = 0, \dots, k-1$ requires
\begin{align*}
&\text{lim}_{\varepsilon \rightarrow 0^+} l_{\bar{d}}\left(\bs x^{\left(m_{\bar{d}}\right)}| 0.5 + m_{\bar{d}} - \left(j+1\right) - \varepsilon, \mu_{t}, \sigma, \bs \theta, \bs \phi \right) = \\ &\quad \quad \quad \quad \quad \quad l_{\bar{d}}\left(\bs x^{\left(m_{\bar{d}}\right)}| -0.5 + m_{\bar{d}} - j, \mu_{t}, \sigma, \bs \theta, \bs \phi \right).
\end{align*} 
Let $\boldsymbol \Omega\left(d, \bs \theta, \bs \phi, \sigma \right)$ refer to the $\left(n-m_{\bar{d}}\right)\times \left(n-m_{\bar{d}}\right)$ covariance matrix of the stationary differenced data $\boldsymbol x^{\left(m_{\bar{d}}\right)}$ with elements $\omega\left(d, \bs \theta, \bs \phi, \sigma \right)_{ii'}$ given by the autocovariance function  $\omega\left(\left|i-i'\right|; d, \bs \theta, \bs \phi, \sigma \right)$. The log-likelihood $l_{\bar{d}}\left(\bs x^{\left(m_{\bar{d}}\right)}| d, \mu_{t}, \sigma, \bs \theta, \bs \phi \right)$ is
\begin{align*}
l_{\bar{d}}\left(\bs x^{\left(m_{\bar{d}}\right)}| d, \mu, \sigma, \bs \theta, \bs \phi \right) = -&n\text{log}\left(2\pi\right)/2 - \text{log}\left(\left|\bs \Omega\left(d, \bs \theta,\bs \phi, \sigma\right)\right|\right)/2 - \\
&\left(\bs x^{\left(m_{\bar{d}}\right)} - \bs \mu^{\left(m_{\bar{d}}\right)} \right)'\bs \Omega\left(d, \bs \theta,\bs \phi, \sigma\right)^{-1}\left(\bs x^{\left(m_{\bar{d}}\right)} - \bs \mu^{\left(m_{\bar{d}}\right)} \right)/2.
\end{align*}
Because the log-likelihood depends on $d$ only through the autocovariance function and is a continuous function of the autocovariance function, the log-likelihood $l_{\bar{d}}\left(\bs x^{\left(m_{\bar{d}}\right)}| d, \mu_{t}, \sigma, \bs \theta, \bs \phi \right)$ is continuous at $d = -0.5 + m_{\bar{d}} - j$ if the autocovariance function is continuous at $d = -0.5 + m_{\bar{d}} - j$, 
\begin{align*}
\text{lim}_{\varepsilon\rightarrow 0^+} \omega\left(h; 0.5 + m_{\bar{d}} - \left(j+1\right) - \varepsilon, \bs \theta,\bs \phi, \sigma\right) = \omega\left(h; -0.5 + m_{\bar{d}} - j, \bs \theta,\bs \phi, \sigma\right).
\end{align*} 
Letting $\gamma\left(h; d, \bs \theta, \bs \phi, \sigma\right)$ refer to the autocovariance function of a mean-zero stationary ARFIMA$\left(p, d, q\right)$ process with parameters $d$, $\bs \theta$, $\bs \phi$, and $\sigma^2$, and letting $\tilde{\bs \theta}^{(j)}$ refer to the coefficients of the moving average polynomial $\left(1 - B\right)^j\theta\left(B\right)$, we have
\footnotesize
\begin{align*}
\text{lim}_{\varepsilon\rightarrow 0^+}&\omega\left(h; 0.5 + m_{\bar{d}} - \left(j+1\right) - \varepsilon, \bs \theta,\bs \phi, \sigma\right) = \text{lim}_{\varepsilon\rightarrow 0^+} \gamma\left(h; 0.5 - \varepsilon, \tilde{\bs \theta}^{\left(j+1\right)},\bs \phi, \sigma\right) =\\
&\text{lim}_{\varepsilon\rightarrow 0^+}  \int_{-\pi}^\pi \text{exp}\left\{ih\nu \right\}\left(\frac{\sigma^2\left| \tilde{\theta}^{\left(j+1\right)}\left(\text{exp}\left\{-i\nu\right\}\right)\right|^2}{2\pi\left|\phi\left(\text{exp}\left\{-i\nu\right\} \right)\right|^2}\right)\left|1 - \text{exp}\left\{-i\nu\right\} \right|^{-2\left(0.5 - \varepsilon\right)}d\nu = \\
&\text{lim}_{\varepsilon\rightarrow 0^+}  \int_{-\pi}^\pi \text{exp}\left\{ih\nu \right\}\left(\frac{\sigma^2\left| \left(1 - \text{exp}\left\{-i\nu\right\} \right)^{j + 1}\theta\left(\text{exp}\left\{-i\nu\right\}\right)\right|^2}{2\pi\left|\phi\left(\text{exp}\left\{-i\nu\right\} \right)\right|^2}\right)\left|1 - \text{exp}\left\{-i\nu\right\} \right|^{-2\left(0.5 - \varepsilon\right)}d\nu = \\
&\text{lim}_{\varepsilon\rightarrow 0^+}  \int_{-\pi}^\pi \text{exp}\left\{ih\nu \right\}\left(\frac{\sigma^2\left| \left(1 - \text{exp}\left\{-i\nu\right\} \right)^{j}\theta\left(\text{exp}\left\{-i\nu\right\}\right)\right|^2}{2\pi\left|\phi\left(\text{exp}\left\{-i\nu\right\} \right)\right|^2}\right)\left|1 - \text{exp}\left\{-i\nu\right\} \right|^{-2\left(0.5 - \varepsilon - 1\right)}d\nu = \\
&\text{lim}_{\varepsilon\rightarrow 0^+}  \int_{-\pi}^\pi \text{exp}\left\{ih\nu \right\}\left(\frac{\sigma^2\left|\tilde{\theta}^{(j)}\left(\text{exp}\left\{-i\nu\right\}\right)\right|^2}{2\pi\left|\phi\left(\text{exp}\left\{-i\nu\right\} \right)\right|^2}\right)\left|1 - \text{exp}\left\{-i\nu\right\} \right|^{-2\left(-0.5 - \varepsilon\right)}d\nu = \\
&\text{lim}_{\varepsilon\rightarrow 0^+}   \gamma\left(h; -0.5 - \varepsilon, \tilde{\bs \theta}^{\left(j\right)},\bs \phi, \sigma\right)=  \gamma\left(h; -0.5, \tilde{\bs \theta}^{\left(j\right)},\bs \phi, \sigma\right)=\omega\left(h; -0.5 + m_{\bar{d}} - j, \bs \theta,\bs \phi, \sigma\right).
\end{align*} 
\normalsize

Throughout, we make use of derivations of the spectral density and autocovariance function of an ARFIMA$\left(p, d, q\right)$ process in \cite{Sowell1992a}. Although  \cite{Sowell1992a} focuses on stationary ARFIMA$\left(p, d, q\right)$ process with roots of the autoregressive and moving average polynomials $\phi\left(x\right)$ and $\theta\left(x\right)$ outside the unit circle and $-0.5 < d < 0.5$, the derivations themselves do not require that the roots of the moving average polynomial $\theta\left(x\right)$ lie strictly outside the unit circle and allow for $-1 < d \leq -0.5$. 
\end{proof}

\clearpage

\section{The SCSS Approximate Likelihood}\label{appsec:sec2}

The SCSS approximation is obtained by assuming that the finite differences 
\begin{align*}
\left(1 - B\right)_+^{d - m_{\bar{d}} + j}(x^{\left(m_{\bar{d}}\right)}_t - \mu^{\left(m_{\bar{d}}\right)}_t)
\end{align*} are distributed according to an ARFIMA$\left(p, 0, j + q\right)$ process. The SCSS likelihood is not continuous. The SCSS likelihood is not generally equivalent to the CSS likelihood, which assumes that the finite differences $\left(1 - B\right)_+^{d}\left(y_t - \mu_t\right)$ are distributed according to an ARFIMA$\left(p, 0, q\right)$ process.
 The second and third rows of Figure~\ref{fig:mean_ests} show average estimates of $d$ obtained by maximizing the Whittle and SCSS approximate likelihoods with respect to $d$, $\mu$, and $\sigma^2$, respectively.  For the SCSS estimator, we pay a price for overdifferencing. The SCSS approximate likelihood based estimators of $d$ performs relatively well only when the true value $d$ satisfies $\bar{d} - 1.5 \leq d \leq \bar{d}$.

\begin{figure}
\centering
\includegraphics[scale = 1]{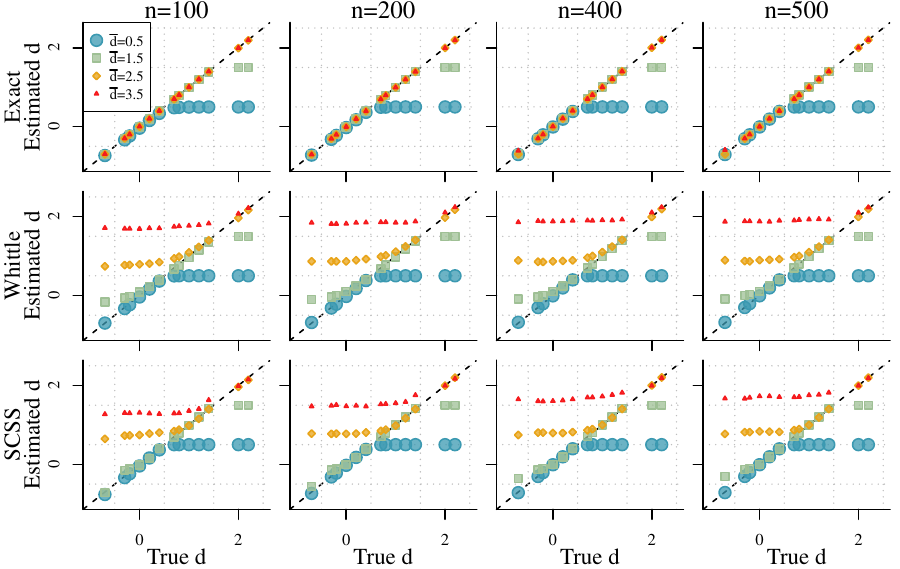}
\caption{Average estimates of $d$ across 100 ARFIMA$\left(0, d, 0\right)$ time series with $\mu = 0$ and $\sigma^2 = 1$ from  maximizing exact Whittle, or SCSS  likelihoods  with respect to $d$, $\mu$, and $\sigma^2$.
}
\label{fig:mean_ests}
\end{figure}

 At first glance, the SCSS estimator's tendency to overestimate $d$ when the true value is far less than  $\bar{d}$ may appear to conflict with the existing literature, specifically \cite{Hualde2011}. \cite{Hualde2011} demonstrate consistency and asymptotic normality of CSS estimators for arbitrary values of the differencing parameter $d$ when the truncated fractional differences $\left(1 - B\right)^d_+ y_t$ are  distributed according to a stationary ARMA$\left(p, q\right)$ model. However, our results depict the performance of the SCSS estimator when the untruncated fractional differences $\left(1 - B\right)^d y_t$ are distributed according to a stationary ARMA$\left(p, q\right)$ model. The truncated and untruncated models differ, especially when $d \leq -0.5$ and moreso as $d$ decreases below $-0.5$, which corresponds to the case where the true value of the differencing parameter $d$ is far less than the upper bound, specifically  $d \leq \bar{d} - 1$.
The differences between the truncated and untruncated models are illustrated in Section~\ref{appsec:sec11} of the Appendix.  We believe that this explains how we can observe the systematic overestimation of $d$ by the SCSS estimator shown in Figure~\ref{fig:mean_ests} while the results of \cite{Hualde2011} hold; the results of \cite{Hualde2011} hold under a different model for the observed time series data, which diverges more from the model we consider as the true value of $d$ gets further from the upper bound $\bar{d}$.  

\begin{figure}[h!]
\centering
\includegraphics[scale = 1]{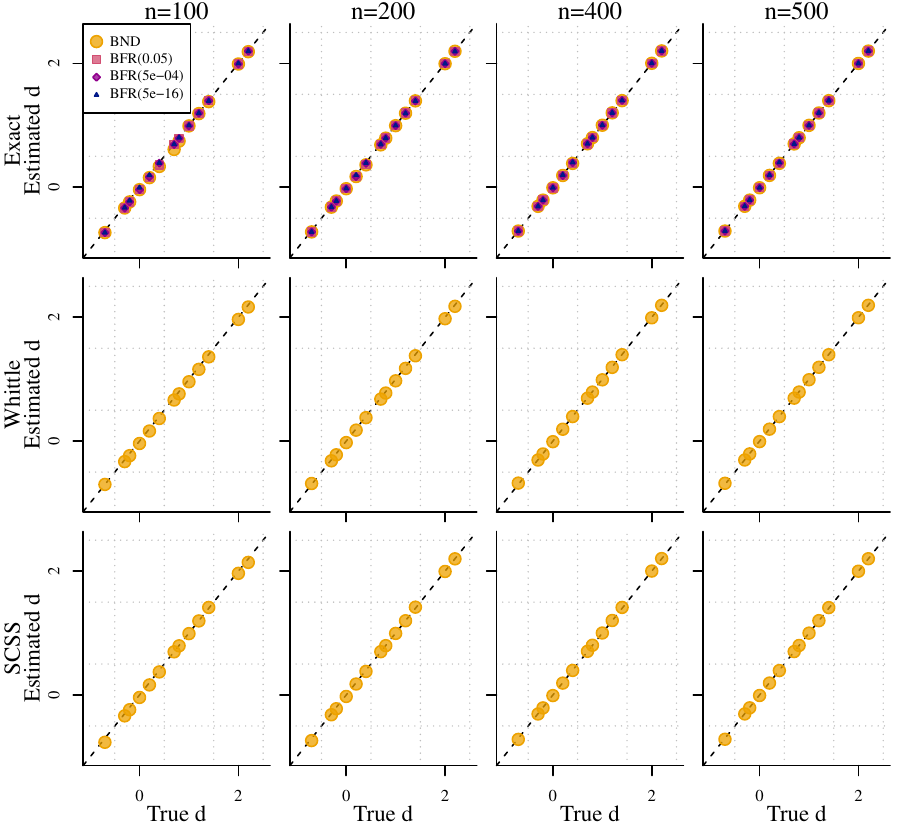}
\caption{Average BND and  BFR$\left(\epsilon\right)$ estimates of $d$ across $100$ ARFIMA$\left(0, d, 0\right)$ time series with $\mu = 0$ and $\sigma^2 = 1$ for different values of $\epsilon$, which determines the percentile used to choose $\bar{d}$. When estimating $d$, the $\mu$ and  $\sigma^2$ are treated as unknown. }
\label{fig:adests}
\end{figure}

Figure~\ref{fig:adests}  also  shows the average estimates of $d$ obtained using the BND  Whittle and SCSS  procedures, and indicates that the BND SCSS estimator also produces excellent estimates of $d$ 
regardless of the true value of the memory parameter, especially when $n$ is large. 

\begin{figure}
\centering
\includegraphics[scale = 1]{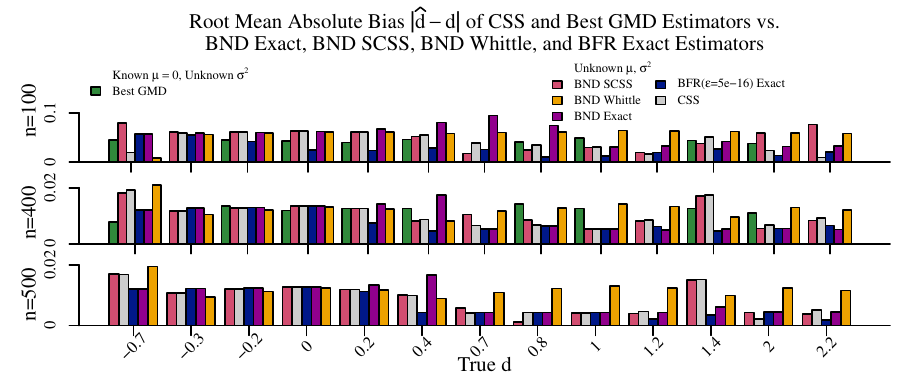} 
\caption{Root  average absolute bias of CSS, BND, and BFR$\left(5\times 10^{-16}\right)$ estimators and the Best GMD estimator of $d$ across 100 and 5,000 ARFIMA$\left(0, d, 0\right)$ time series with mean $\mu = 0$ and variance $\sigma^2 = 1$, respectively, reprinted from \cite{Mayoral2007} for Best GMD. For the CSS, BND, and BFR$\left(5\times 10^{-16}\right)$ estimators, we assume $\mu_t = \mu$ and treat $\mu$ and $\sigma^2$ as unknown.}
\label{fig:biascomp}
\end{figure}

We add a comparison of the absolute bias of the BND SCSS estimator to the adaptive exact and Whittle  likelihood estimators  of $d$  to the absolute bias of two alternatives depicted in Figure~7 of the main manuscript.
Figure~\ref{fig:biascomp} shows that our adaptive exact and SCSS estimators, especially the BFR$\left(5\times 10^{-16}\right)$ estimator, are less biased in many settings, whereas our adaptive Whittle likelihood estimator is the most biased estimator when $d \geq 0.5$ and performs similarly to our adaptive SCSS estimator otherwise. 
In comparison to the Best GMD estimator, the BFR$\left(5\times 10^{-16}\right)$ performs better when $d \geq 0$ regardless of the sample size. As the sample size increases, the BND exact estimator outperforms the Best GMD estimator for all $d \geq 0$ and the BND SCSS estimator outperforms the Best GMD estimator for most $d \geq 0$.
This is especially noteworthy given that the BFR$\left(\epsilon\right)$ estimator summarized in Figure~\ref{fig:biascomp} treats the overall mean $\mu$  and variance $\sigma^2$  as unknown, whereas  the Best GMD estimator treats  the overall mean $\mu$ as constant and known to be equal to $0$  and the variance $\sigma^2$ as unknown.

Comparison to the CSS estimator suggests systematically poorer performance of the CSS estimator versus most adaptive estimators when the true value of $d$ is close to the boundaries of stationarity, $0.5$, and $1.5$, even as the sample size increases. 
When the sample size is smaller, the BFR$\left(5\times 10^{-16}\right)$ estimator performs better than the CSS estimator when $-0.3 \leq d \leq 1.4$, the BND exact estimator performs better  than the CSS estimator when $d \in \left\{-0.2, 0, 1.4\right\}$, and the BND SCSS estimator performs better than the than the CSS estimator  when $0 \leq d \leq 1$ and $d = 1.4$. As the sample size increases, the BFR$\left(5\times 10^{-16}\right)$ estimator outperforms the CSS estimator when $d \in \left\{-0.7, 0, 0.2, 0.4, 1.4\right\}$, the BND exact estimator outperforms the CSS estimator when $d \in \left\{-0.7, 0, 1.4\right\}$, and the BND SCSS estimator outperforms the CSS estimator when $-0.7 \leq d \leq -0.2$, $0.2 \leq d \leq 1$, and $d = 1.4$.

\section{Variability Comparison with Alternatives}\label{appsec:sec3}

\begin{figure}
\centering
\includegraphics{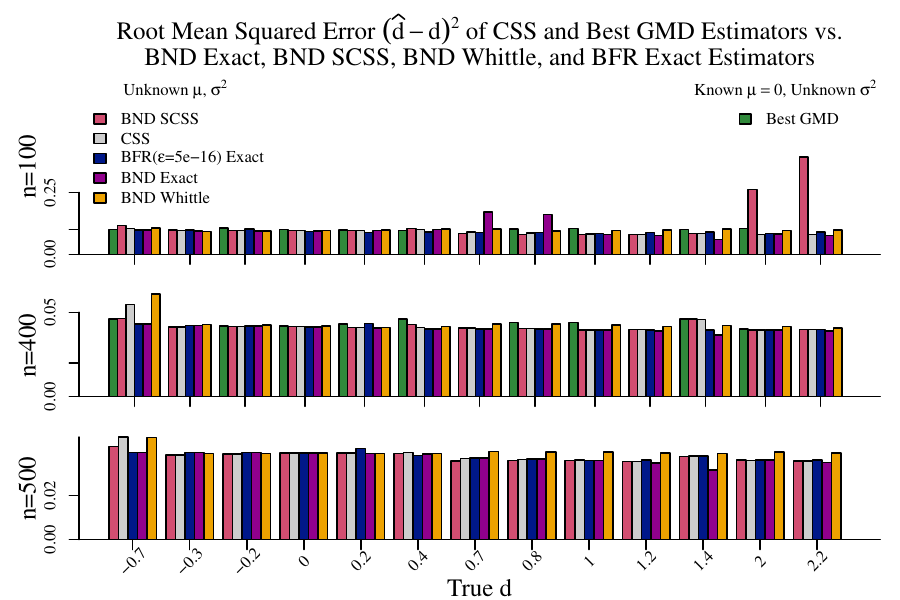}
\caption{For CSS, BND, and BFR$\left(\epsilon = 5\times 10^{-16}\right)$, root mean squared error (RMSE) of estimates of the memory parameter $\hat{d}$ is approximated from $100$ simulated ARFIMA$\left(0, d, 0\right)$ time series with mean $\mu = 0$ and variance $\sigma^2 = 1$ for each sample size $n$ and true value of the differencing parameter $d$. Root mean squared error (RMSE) for the best GMD estimator is reprinted from \cite{Mayoral2007} by choosing the smallest average RMSE across GMD estimators that use a different number of autocorrelations at each value of the true memory parameter $d$ and sample size $n$.}
\label{fig:rmsecomp}
\end{figure}

Figure~\ref{fig:rmsecomp} shows that the variability of our adaptive estimators is comparable to the variability of the CSS estimator and the ``Best GMD'' estimator for most sample sizes and true values of $d$. Again, this is especially noteworthy given that the adaptive BND SCSS, BND exact, and BFR$\left(\epsilon\right)$ exact estimators summarized in Figure~\ref{fig:rmsecomp} treat the overall mean $\mu$ and variance $\sigma^2$ as unknown, whereas the version of Mayoral's estimator summarized in Figure~\ref{fig:rmsecomp} treats the overall mean $\mu$ as constant and known to be equal to $0$ and the variance $\sigma^2$ as unknown.
There are some exceptions when the sample size is smaller; the adaptive BND SCSS estimator is especially variable when $d \geq 2$ and the adaptive BND exact estimator is especially variable when $0.5 < d < 1$.

\clearpage

\section{Estimation of $\sigma^2$ Comparison with Alternatives}\label{appsec:sec4}

\begin{figure}[h]
\centering
\includegraphics{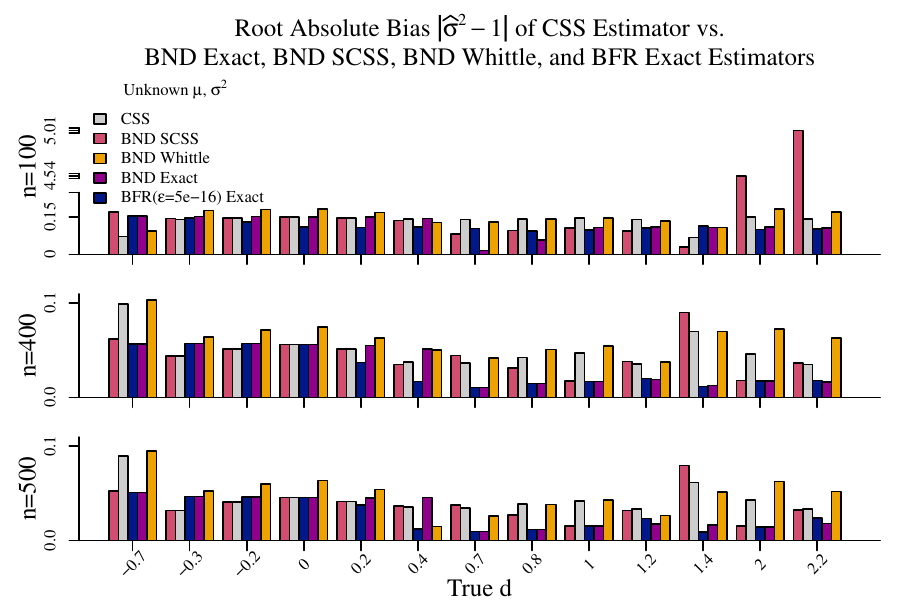}
\label{fig:ssecomp}
\caption{Average absolute bias of estimates of the noise variance $\sigma^2$ across $100$ simulated ARFIMA$\left(0, d, 0\right)$ time series with mean $\mu = 0$ and variance $\sigma^2 = 1$ for each sample size $n$ and true value of the differencing parameter $d$.}
\end{figure}
\clearpage

\section{Estimation of $d$ Comparison with Alternatives When Estimating a Polynomial Mean}\label{appsec:sec5}

\begin{figure}[h]
\centering
\includegraphics{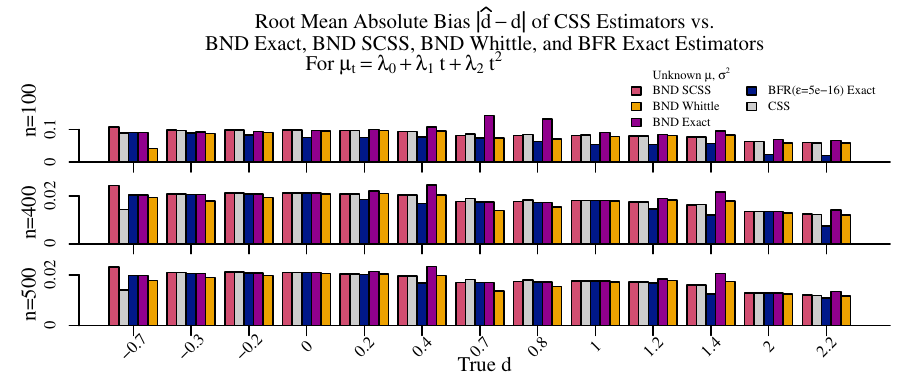}
\label{fig:biasvsfixed}
\caption{Root average absolute bias of CSS, BND, and BFR$\left(5\times 10^{-16}\right)$ estimators of $d$ across 100 simulated mean-zero ARFIMA$\left(0, d, 0\right)$ time series with variance $\sigma^2 = 1$. Estimates are obtained assuming $\mu_t =  \sum_{j = 0}^2 t^j \lambda_j$, where $\lambda_0$, $\lambda_1$, $\lambda_2$ and $\sigma^2$ are treated as unknown.}
\end{figure}
\clearpage

\section{Estimation of $d$ Comparison with Alternatives for Heavy Tailed Observations}\label{appsec:sec6}

\begin{figure}[h]
\centering
\includegraphics{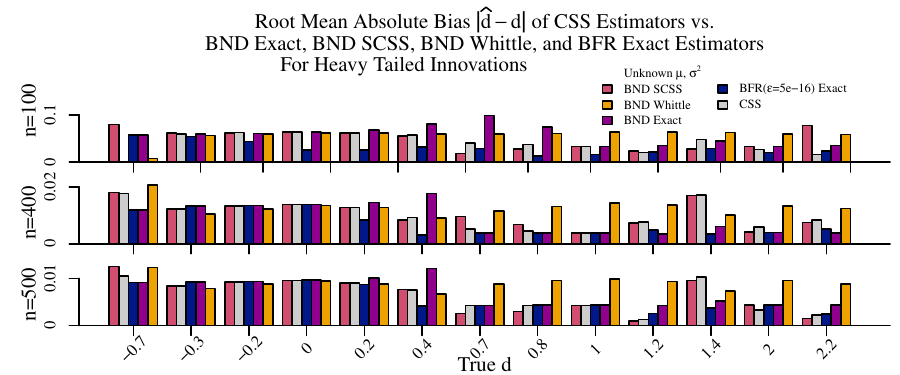}
\label{fig:biasvslight}
\caption{Root average absolute bias of CSS, BND, and BFR$\left(5\times 10^{-16}\right)$ estimators of $d$ across 100 simulated mean-zero time series simulated according to $\left(1 - B\right)^d y_t = z_t$, where $z_t$ are independent, identically distributed mean zero and unit variance Laplace random variables, which have heavier-than-normal tails. Estimates are obtained assuming $\mu_t =  \mu$ where $\mu$ and $\sigma^2$ are treated as unknown.}
\end{figure}
\clearpage

\section{Estimation of $d$ Comparison with Alternatives for Light Tailed Observations}\label{appsec:sec7}

\begin{figure}[h]
\centering
\includegraphics{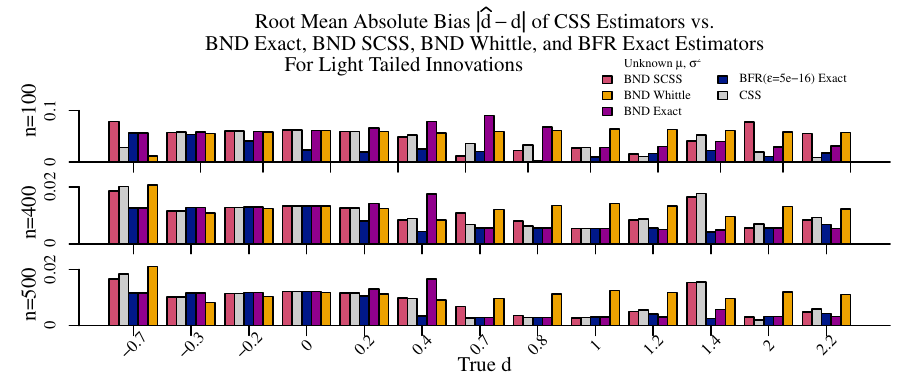}
\label{fig:biasvslight}
\caption{Root average absolute bias of CSS, BND, and BFR$\left(5\times 10^{-16}\right)$ estimators of $d$ across 100 simulated mean-zero time series simulated according to $\left(1 - B\right)^d y_t = z_t$, where $z_t$ are independent, identically distributed mean zero and unit variance generalized normal random variables with shape parameter 6, which have lighter-than-normal tails. Estimates are obtained assuming $\mu_t =  \mu$ where $\mu$ and $\sigma^2$ are treated as unknown.}
\end{figure}
\clearpage

\section{Estimation of Confidence Intervals}\label{appsec:sec8}

\begin{figure}[h!]
\centering
\includegraphics{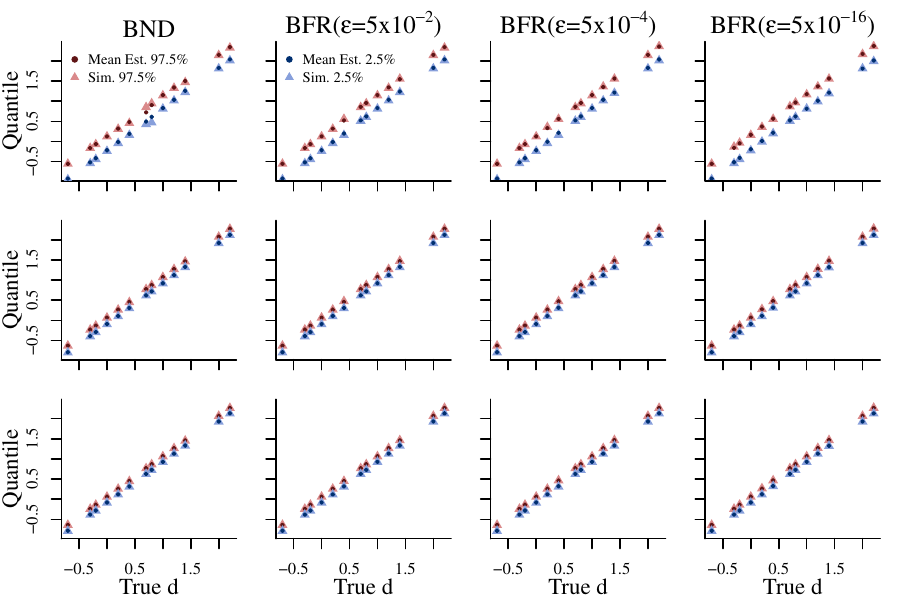}
\label{fig:checkint}
\caption{Comparison of lower 2.5\% and upper 97.5\% quantiles of BND and BFR$\left(\epsilon\right)$ exact likelihood estimators obtained by simulating $100$ simulated ARFIMA$\left(0, d, 0\right)$ time series with mean $\mu = 0$ and variance $\sigma^2 = 1$ versus average lower 2.5\% and average upper 97.5\% quantiles corresponding to BND and BFR$\left(\epsilon\right)$ exact likelihood estimators obtained using numerical differentiation across $100$ simulated ARFIMA$\left(0, d, 0\right)$ time series with mean $\mu = 0$ and variance $\sigma^2 = 1$ for each sample size $n$ and true value of the differencing parameter $d$.}
\end{figure}

\begin{figure}[h] 
\centering
\includegraphics{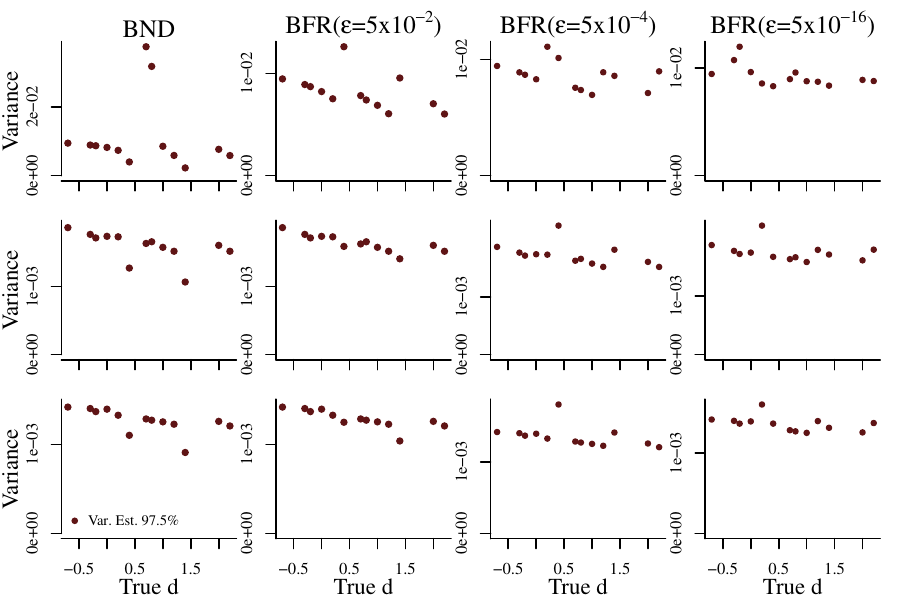}
\label{fig:checkint}
\caption{Variance of upper 97.5\% quantiles corresponding to BND and BFR$\left(\epsilon\right)$ exact likelihood estimators obtained using numerical differentiation across $100$ simulated ARFIMA$\left(0, d, 0\right)$ time series with mean $\mu = 0$ and variance $\sigma^2 = 1$ for each sample size $n$ and true value of the differencing parameter $d$.}
\end{figure}
\clearpage
\section{Likelihood Instability}\label{appsec:sec9}

\begin{figure}[h] 
\centering
\includegraphics{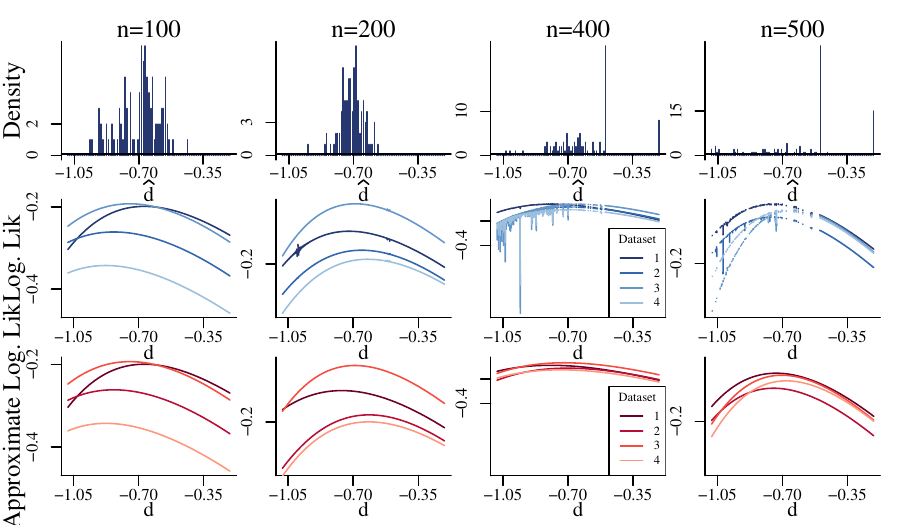}
\label{fig:trouble}
\caption{The first row shows histograms of exact maximum likelihood estimates $\hat{d}$ of the differencing parameter $d$ obtained by setting $\bar{d} = 3.5$ across a subset of $100$ simulated ARFIMA$\left(0, -0.7, 0\right)$ time series with mean $\mu = 0$ and variance $\sigma^2 = 1$ for each sample size $n$. The second row shows selected profile log-likelihood curves obtained by setting $\bar{d} = 3.5$ for four of the time series depicted in the previous row for each value of $n$. The third row shows approximate profile log-likelihood curves corresponding to the profile log-likelihood curves shown in the previous row obtained by treating values of the time series as conditionally independent after conditioning on the previous 100 values.}
\end{figure}

\clearpage
\section{Chemical Process Concentration and Temperature}\label{appsec:sec10}

\begin{figure}[h]
\centering
\includegraphics{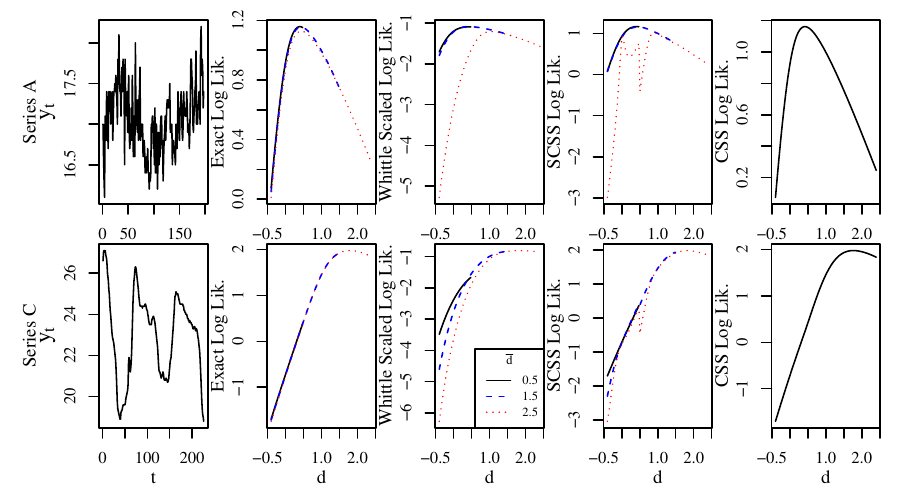}
\caption{Observed Series A and Series C time series and corresponding exact and Whittle profile  log-likelihood curves for $\bar{d} \in \left\{0.5, 1.5, 2.5\right\}$, with the mean $\mu$ and variance $\sigma^2$ profiled out.}
\label{fig:seriesacnoar}
\end{figure}

\begin{table}
\footnotesize
\centering
\begin{tabular}{ccc|rc|c|c|c}
 & & & \multicolumn{2}{c|}{Exact} &  \multicolumn{1}{c|}{Whittle} & \multicolumn{1}{c|}{SCSS}  & \multicolumn{1}{c}{CSS} \\
Data & $n$ & $\bar{d}$ & \multicolumn{1}{c}{$\hat{d}_{\bar{d}}$}  & \multicolumn{1}{c|}{95\% Interval for $d$}  & \multicolumn{1}{c|}{$\hat{d}_{\bar{d}}$} & \multicolumn{1}{c|}{$\hat{d}_{\bar{d}}$}  & \multicolumn{1}{c}{$\hat{d}_{\bar{d}}$}\\ \hline 
\multirow{3}{*}{Series A} & \multirow{3}{*}{$197$} & $0.5$ & $0.400$ & $\left(0.304, 0.496\right)$ & \cellcolor{lightgray}  $0.420$ &  \cellcolor{lightgray} $0.418$ &  \multirow{3}{*}{$0.418$} \\ 
& & $1.5$ &\cellcolor{lightgray}$0.427$ & \cellcolor{lightgray}$\left(0.319, 0.534\right)$ & $0.422$  & $0.500$ & \\
& & $2.5$ & $0.436$ &  $\left(0.326, 0.545\right)$ & $1.047$ & $0.901$  &  \\ \hline
\multirow{3}{*}{Series C} & \multirow{3}{*}{$226$} &  $0.5$ & $0.500$  & $-$& $0.500$ & $0.500$ & \multirow{3}{*}{$1.766$}\\ 
& & $1.5$ & $1.500$ &  $-$ & $1.500$ & $1.500$ & \\
& &$2.5$  & \cellcolor{lightgray} $1.788$ & \cellcolor{lightgray}$\left(1.659,    1.918\right)$ & \cellcolor{lightgray} $1.799$ &  \cellcolor{lightgray} $1.810$ 
\\ \hline
\end{tabular}
\caption{Estimates and corresponding 95\% confidence intervals for $d$ for the chemical process concentration readings (Series A) and chemical process temperature readings (Series C) for different values of $\bar{d}$. The BFR$\left(\epsilon =5\times 10^{-16}\right)$ exact likelihood,  BND Whittle, and BND SCSS estimates are highlighted in gray. 95\% intervals for exact likelihood estimates $\hat{d}_{\bar{d}}$ are provided for values of $\bar{d}$ that correspond to log-likelihoods that are decreasing at $\bar{d}$.  The mean $\mu$ and variance $\sigma^2$ are treated as unknown when estimating the differencing parameter $d$. }
\label{tab:apppe}
\end{table}

\begin{sidewaystable}
\centering
\tiny
\begin{tabular}{ccc|ccc|ccc|ccc|ccc}
 & & & \multicolumn{3}{c|}{Exact} & \multicolumn{3}{c|}{Whittle} & \multicolumn{3}{c|}{SCSS} & \multicolumn{3}{c}{CSS} \\
Data & $n$ & $\bar{d}$ & $d$ & $\theta_1$ & $\phi_1$ & $d$ & $\theta_1$ & $\phi_1$ & $d$ & $\theta_1$ &  $\phi_1$  & $d$ & $\theta_1$ &  $\phi_1$ \\ \hline 
\multirow{4}{*}{Series A} & \multirow{4}{*}{$197$} & $0.5$ & $0.419$ & $-0.037$ & $-$ &\cellcolor{lightgray} $0.449$ & \cellcolor{lightgray}$-0.043$ & \cellcolor{lightgray}$-$ &\cellcolor{lightgray}0.480& \cellcolor{lightgray}$-0.093$ & \cellcolor{lightgray} $-$ & \multirow{4}{*}{$0.479$} & \multirow{4}{*}{$-0.092$} & \multirow{4}{*}{$-$} \\ 
& & $1.5$ 
&\cellcolor{lightgray}$0.502$  & \cellcolor{lightgray}$-0.117$ &  \cellcolor{lightgray}$-$ &$0.470$ & $-0.077$ & $-$ &  $0.500$ &  $-0.115$ &  $-$ \\
& & $2.5$ & $1.314$  & $-0.923$ & $-$ & $2.047$& $-1.000$  & $-$ &  $0.990$ &  $-0.685$ &  $-$  \\ 
& & $3.5$ & $1.310$ & $-0.911$ & $-$ & $2.292$& $-0.665$ & $-$ &  $0.994$ &  $-0.721$ &  $-$  \\ \hline
\multirow{4}{*}{Series C} & \multirow{4}{*}{$226$} &  $0.5$ 
& $0.500$ & $-$& $1.000$& $0.215$  & $-$  & $0.907$ &  $0.500$ &  $-$ & $0.982$ &  \multirow{4}{*}{$0.939$} & \multirow{4}{*}{$-$} & \multirow{4}{*}{$0.857$} \\ 
& & $1.5$ & $0.950$ & $-$ & $0.850$ &\cellcolor{lightgray}$0.908$ &\cellcolor{lightgray}$-$  &\cellcolor{lightgray}$0.858$ &\cellcolor{lightgray} $0.894$ &\cellcolor{lightgray} $-$ & $\cellcolor{lightgray}0.881$ \\
& &$2.5$  & \cellcolor{lightgray}$0.972$ & \cellcolor{lightgray}$-$ &  \cellcolor{lightgray}$0.842$ &$0.892$ &$-$ &$0.930$ &  $1.014$ &  $-$ & $0.815$   \\
& & $3.5$  & $0.971$ &$-$ & $0.852$ & $1.109$ &$-$  &$1.000$  &  $0.996$ &  $-$ & $0.828$ 
\\ \hline
\end{tabular}
\caption{Estimates of the parameters  of  ARFIMA$\left(0, d, 1\right)$ and ARFIMA$\left(1, d, 0\right)$  models both with $\mu_t = \mu$     for the chemical process concentration readings (Series A) and chemical process temperature readings (Series C) for different values of $\bar{d}$. The BFR$\left(\epsilon =5\times 10^{-16}\right)$ exact likelihood estimates, BND Whittle estimates, and BND SCSS estimates  are highlighted in gray. The mean $\mu$ and variance $\sigma^2$ are treated as unknown when estimating the differencing parameter $d$. }
\label{tab:armaapppe}
\end{sidewaystable}

\begin{figure}
\centering
\includegraphics{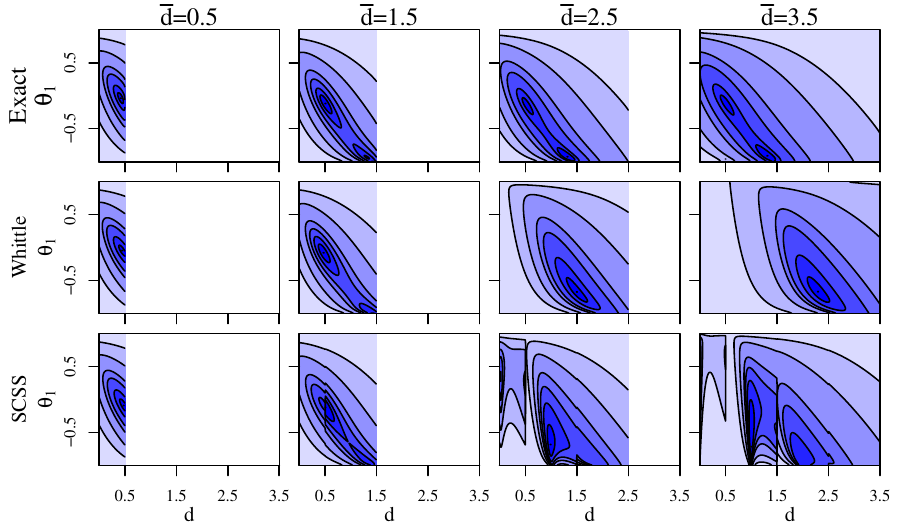}
\caption{Exact, Whittle, and SCSS  joint  profile  log-likelihoods for Series A as a function of the  moving average parameter $\theta_1$  and $d$ for $\bar{d} \in \left\{0.5, 1.5, 2.5, 3.5\right\}$, with the mean $\mu$ and variance $\sigma^2$ profiled out.}
\label{fig:seriesaar}
\end{figure}

\begin{figure}
\centering
\includegraphics{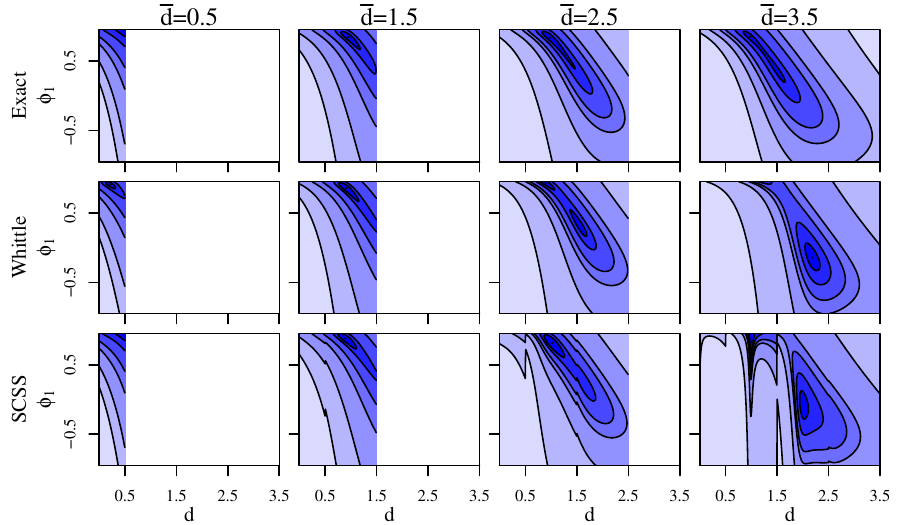}
\caption{Exact, Whittle, and SCSS  joint  profile  log-likelihoods for Series C as a function of the autoregressive parameter $\phi_1$ and $d$ for $\bar{d} \in \left\{0.5, 1.5, 2.5, 3.5\right\}$, with the mean $\mu$ and variance $\sigma^2$ profiled out.}
\label{fig:seriescar}
\end{figure}

\begin{figure}
\centering
\includegraphics{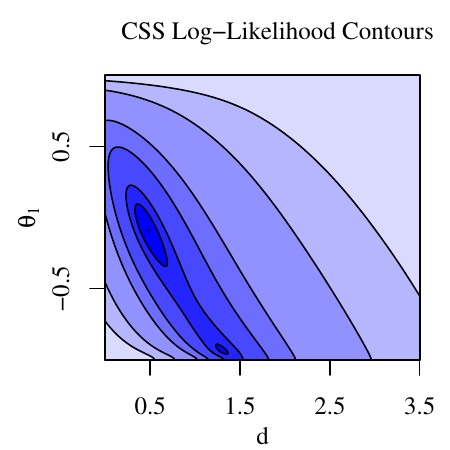}
\caption{CSS joint profile log-likelihood for Series A as a function of the moving average parameter $\theta_1$ and $d$, with the mean $\mu$ and variance $\sigma^2$ profiled out.}
\label{fig:seriesacssnodiff}
\end{figure}

\begin{figure}
\centering
\includegraphics{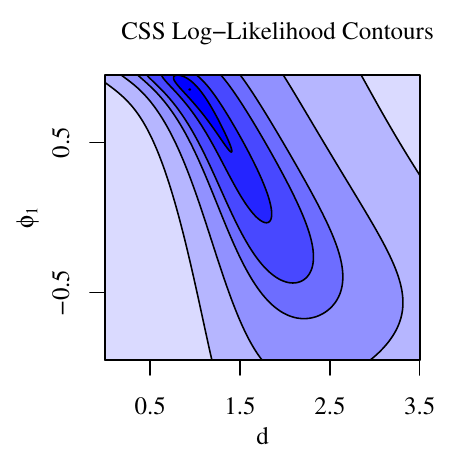}
\caption{CSS joint profile log-likelihood for Series C as a function of the autoregressive parameter $\phi_1$ and $d$, with the mean $\mu$ and variance $\sigma^2$ profiled out.}
\label{fig:seriesccssnodiff}
\end{figure}

\clearpage
\section{Truncated vs. Untruncated Fractional Differences}\label{appsec:sec11}

The model that assumes that truncated fractional differences $\left(1 - B\right)^d_+ y_t$ are  distributed according to a stationary ARMA$\left(p, q\right)$ model and the model that assumes that untruncated fractional differences $\left(1 - B\right)^d y_t$ are distributed according to a stationary ARMA$\left(p, q\right)$ model can both be represented as $y_{t + 1} = \sum_{j = 1}^{t} c_{tj} y_{t+1-j} + z_{t + 1}$, where $z_t$ are independent, mean zero random variables with variance $v_t$ and the values of $c_{tj}$ and $v_t$ are one-step-ahead forecast coefficients and variances determined by which model is being used and its parameters. Comparing the one-step-ahead forecast coefficients and variances across the two models allows us to compare the dependence structure of data generated under the two models. 
We consider the simpler case where $p = q = 0$ and compute the ratios $c_{tj}/\sqrt{v_t}$ for $d \in \{-0.5, -0.75, -1, -1.25, -1.5, -1.75\}$ and $t \in \{0, 1, 2, 3\}$. The ratios $c_{tj}/\sqrt{v_t}$ are shown in Figure~\ref{fig:ctss}. They are similar when $d = -0.5$ but diverge substantially as $d$ decreases. This suggests that when $d \leq -1$, the two models are very different.

\begin{figure}[h] 
\centering
\includegraphics{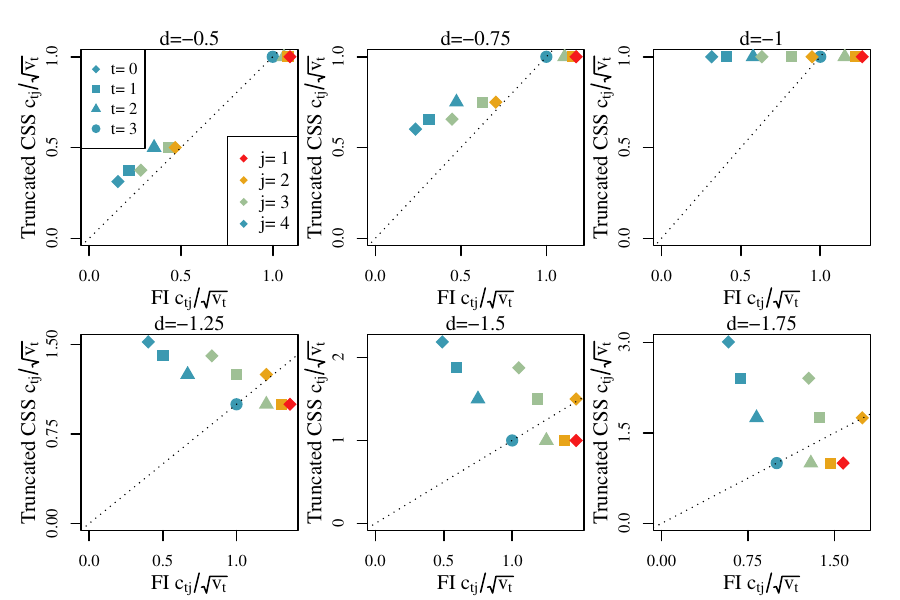}
\label{fig:ctss}
\caption{Each panel compares the scaled one-step-ahead forecast coefficients $c_{tj}/\sqrt{v_t}$ for the truncated and untruncated fractional difference models for values of $d$ between $-0.5$ and $-1.75$.}
\end{figure}

\clearpage
\section{Series C Data Residuals}\label{appsec:sec12}

\begin{figure}[h] 
\centering
\includegraphics{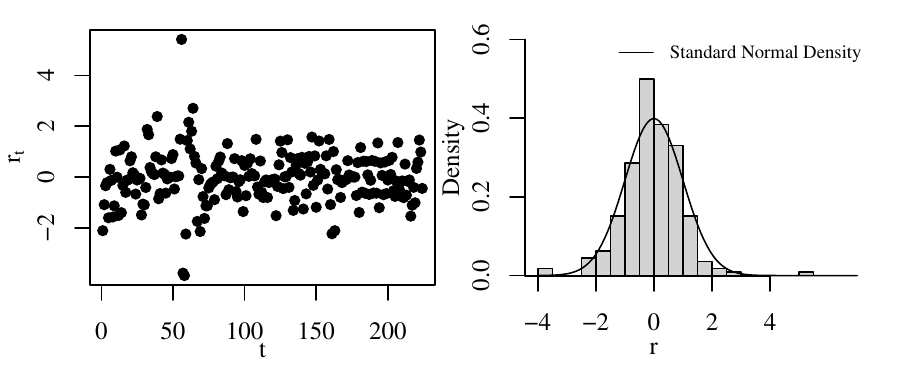}
\label{fig:residsc}
\caption{Residuals from fitting an ARFIMA$(0, d, 0)$ model with $\bar{d} = 2.5$ for Series C, using the estimated differencing parameter shown in Table 1, obtained by premultiplying the differenced response $\boldsymbol x^{(m_{\bar{d}})}$ with $m_{\bar{d}} = 2$ by the square root of the inverse of the estimated covariance matrix of the differenced deviations. The left panel plots the residuals in the order that they appear and the right panel plots a histogram of the residuals and compares them to a standard normal density.}
\end{figure}

\clearpage
\section{Series C Data Compared to Simulations}\label{appsec:sec13}

\begin{figure}[h] 
\centering
\includegraphics{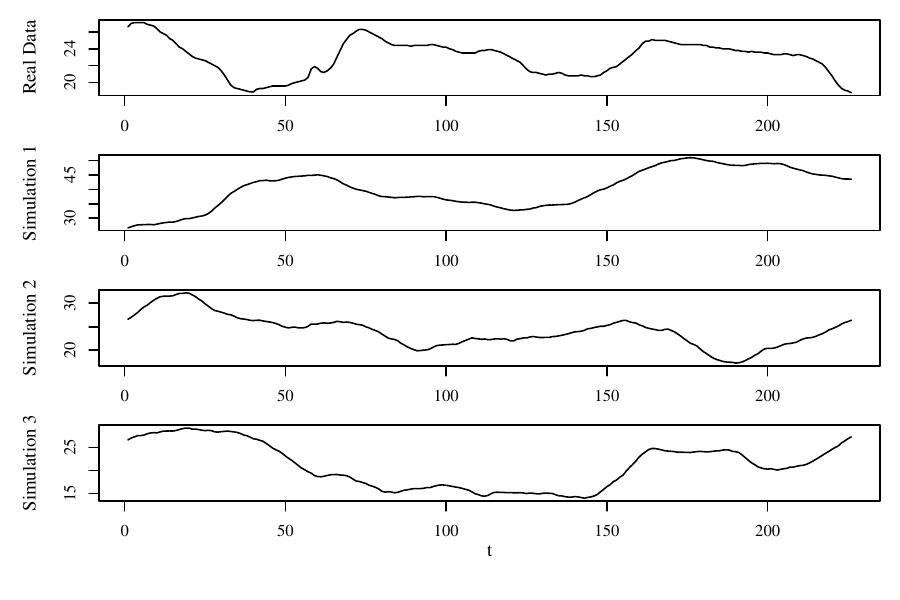}
\label{fig:simsc}
\caption{Comparison of the observed Series C data, plotted in the top panel, with simulated data of the same length simulated from the estimated ARFIMA$(0, d, 0)$ model with $\bar{d} = 2.5$ fit to the Series C data. Three random draws are depicted, to show the variability simulated time series from the fitted model.}
\end{figure}

\clearpage
\section{CO$_2$ Emissions Data}\label{appsec:sec14}

\begin{figure}[h] 
\vspace{-10mm}
\centering
\includegraphics{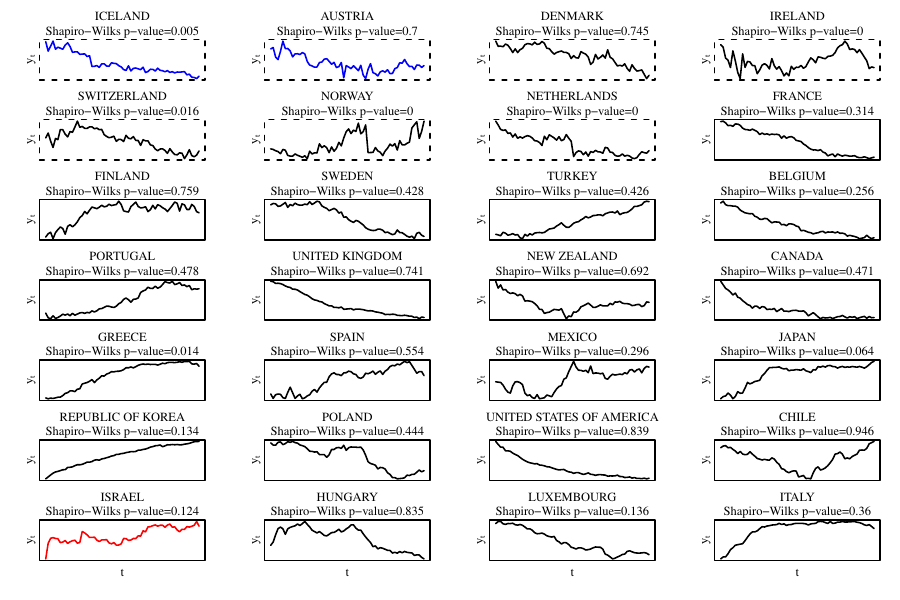}
\label{fig:co2obs}
\caption{Observed emissions time series and $p$-values for Shapiro-Wilks tests of normality of the residuals. Dashed boxes denote countries for which our proposed method rejects the null hypothesis that emissions are not mean reverting. Blue, black, and red lines correspond to the countries for which our proposed method chooses $\bar{d} = 1.5$, $\bar{d} = 2.5$, and $\bar{d} = 3.5$, respectively.}
\end{figure}

\clearpage
\section{Observed and Differenced CO$_2$ Data}\label{appsec:sec15}

\begin{figure}[h] 
\vspace{-10mm}
\centering
\includegraphics{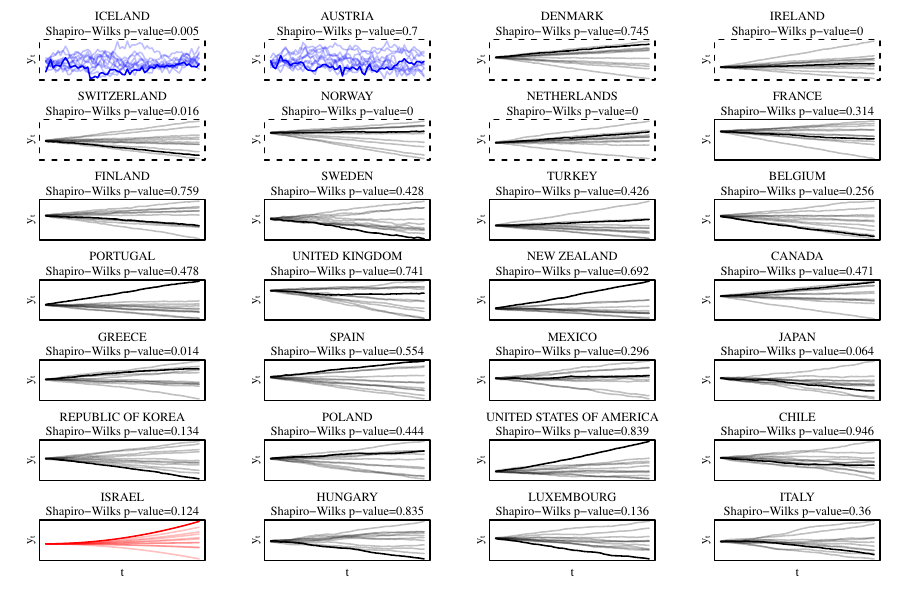}
\label{fig:simco2}
\caption{Observed detrended emissions time series (dark lines), simulated detrended time series from the fitted model (lighter lines), and $p$-values for Shapiro-Wilks tests of normality of the residuals. Dashed boxes denote countries for which our proposed method rejects the null hypothesis that emissions are not mean reverting. Blue, black, and red lines correspond to the countries for which our proposed method chooses $\bar{d} = 1.5$, $\bar{d} = 2.5$, and $\bar{d} = 3.5$, respectively.}
\end{figure}

\clearpage

\clearpage
\section{Selected ECIS Time Series}\label{appsec:sec16}

\begin{figure}[h] 
\centering
\includegraphics{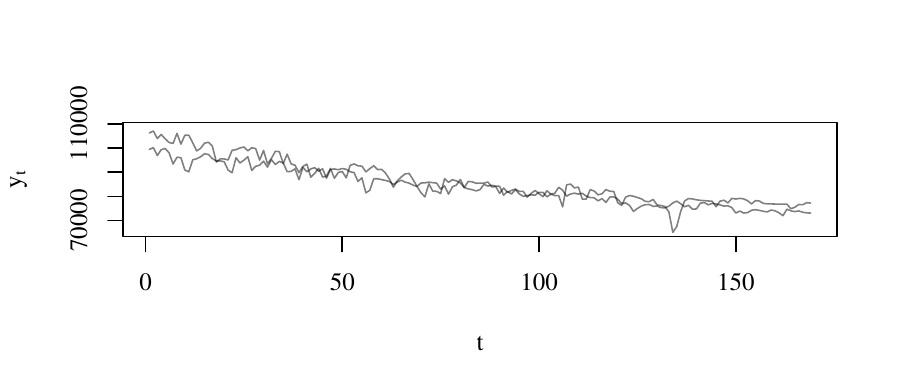}
\label{fig:mdckct}
\caption{Two selected time series that were used to obtain estimates presented in Figure 10 of the main text. These are both based on uncontaminated MDCK cells prepared using BSA measured at the lowest frequency during the first experiment. The $p$-value of a Shapiro-Wilks test of the null hypothesis of normality of the residuals for all replicates in this condition is  less than $10^{-15}$.}
\end{figure}

\begin{figure}[h] 
\centering
\includegraphics{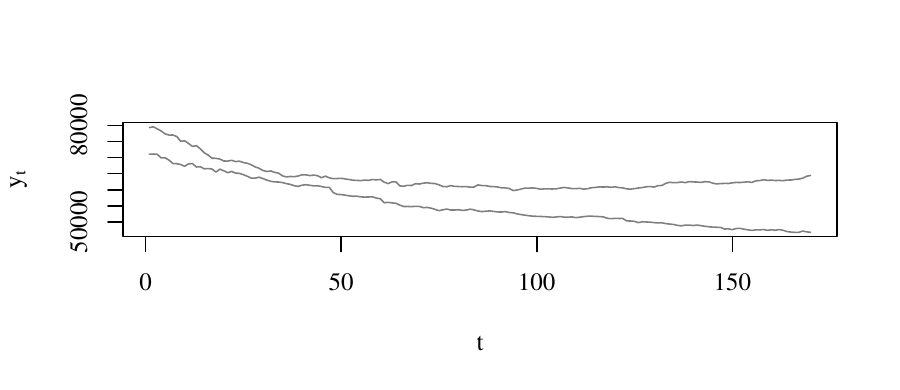}
\label{fig:mdckin}
\caption{Two selected time series that were used to obtain estimates presented in Figure 10 of the main text. These are both based on contaminated MDCK cells prepared using BSA measured at the lowest frequency during the first experiment. The $p$-value of a Shapiro-Wilks test of the null hypothesis of normality of the residuals for all replicates in this condition is  less than $10^{-15}$.}
\end{figure}

\clearpage

\begin{figure}[h] 
\centering
\includegraphics{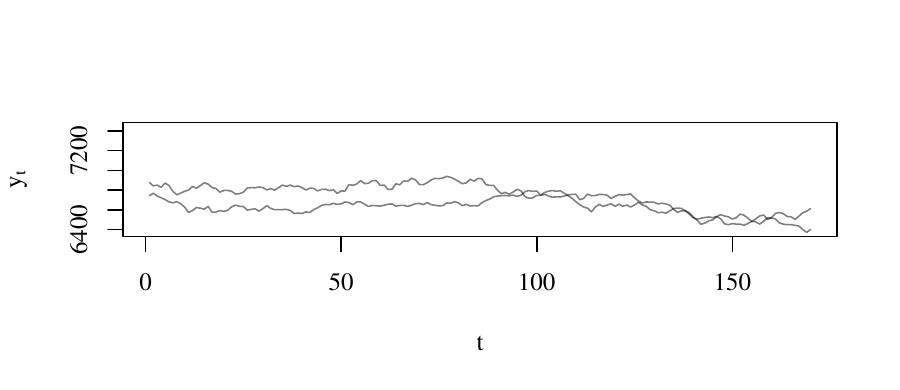}
\label{fig:bscct}
\caption{Two selected time series that were used to obtain estimates presented in Figure 10 of the main text. These are both based on uncontaminated BSC cells prepared using BSA measured at the highest frequency during the first experiment. The $p$-value of a Shapiro-Wilks test of the null hypothesis of normality of the residuals for all replicates in this condition is  less than $0.005$.}
\end{figure}

\begin{figure}[h] 
\centering
\includegraphics{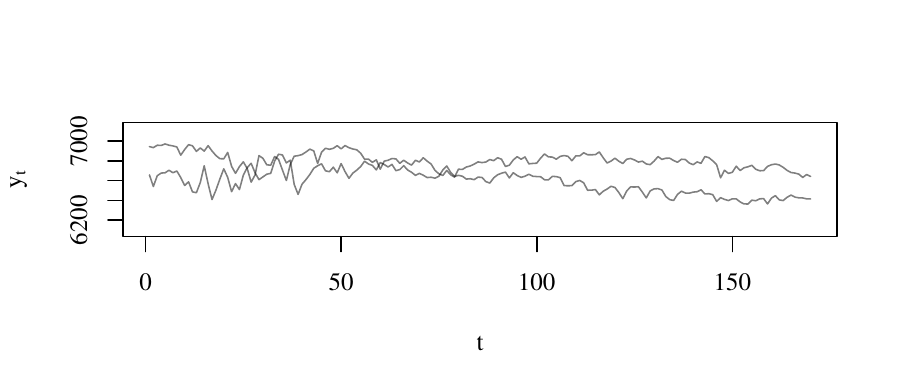}
\label{fig:bscin}
\caption{Two selected time series that were used to obtain estimates presented in Figure 10 of the main text. These are both based on contaminated BSC cells prepared using BSA measured at the lowest frequency during the first experiment. The $p$-value of a Shapiro-Wilks test of the null hypothesis of normality of the residuals for all replicates in this condition is  less than $10^{-15}$.}
\end{figure}

\end{appendix}
	
\end{document}